\documentclass{scrartcl}

\usepackage{lipsum}
\usepackage{amsfonts}
\usepackage{graphicx}
\usepackage{epstopdf}
\usepackage{algorithmic}
\usepackage{cleveref}

\usepackage{amssymb,amsthm}
\usepackage{xspace}
\usepackage{color}
\definecolor{blu3}{rgb}{.1,.0,.4}

\ifpdf
  \DeclareGraphicsExtensions{.eps,.pdf,.png,.jpg}
\else
  \DeclareGraphicsExtensions{.eps}
\fi

\theoremstyle{plain}
\newtheorem{theorem}{Theorem}
\newtheorem{lemma}{Lemma}
\newtheorem{proposition}{Proposition}
\newtheorem{corollary}{Corollary}
\theoremstyle{definition}
\newtheorem{rrule}{Reduction rule}

\crefname{rrule}{Reduction rule}{Reduction rules}
\crefname{lemma}{Lemma}{Lemmas}
\crefname{theorem}{Theorem}{Theorems}
\crefname{corollary}{Corollary}{Corollaries}
\crefname{proposition}{Proposition}{Propositions}
\crefname{section}{Section}{Sections}

\title{The parameterized complexity of finding a 2-sphere in a simplicial complex\thanks{A preliminary version of this paper appeared in Proc. 34th Symposium on Theoretical Aspects of 
Computer Science (STACS 2017)~\cite{BurtonCKP17}.
Partially supported by the Slovenian Research Agency, program P1--0297 and project L7--5459.}}

\author{Benjamin Burton\thanks{School of Mathematics and Physics, The University of Queensland, Brisbane, Australia
  (\texttt{bab@maths.uq.edu.au}).}
\and Sergio Cabello\thanks{Department of Mathematics, IMFM, and Department of Mathematics, FMF, University of Ljubljana, Slovenia 
  (\texttt{sergio.cabello@fmf.uni-lj.si}).}
\and Stefan Kratsch\thanks{Department of Computer Science, University of Bonn, Bonn, Germany. Current address: Department of Computer Science, Humboldt-Universit\"at zu Berlin, Berlin, Germany (\texttt{kratsch@informatik.hu-berlin.de}).}
\and William Pettersson\thanks{School of Science, RMIT University, Melbourne, Australia. Current address: School of Computing Science, University of Glasgow, Glasgow, United Kingdom
  (\texttt{william@ewpettersson.se}).}
}

\usepackage{amsopn}

\newcommand{\C}{\ensuremath{\mathcal{C}}\xspace} 
\newcommand{\D}{\ensuremath{\mathcal{D}}\xspace} 
\newcommand{\K}{\ensuremath{\mathcal{K}}\xspace} 
\newcommand{\Oh}{\ensuremath{\mathcal{O}}\xspace} 
\newcommand{\Q}{\ensuremath{\mathcal{Q}}\xspace} 
\renewcommand{\S}{\ensuremath{\mathcal{S}}\xspace} 
\newcommand{\T}{\ensuremath{\mathcal{T}}\xspace} 
\newcommand{\RR}{\ensuremath{\mathbb R}\xspace}  
\newcommand{\ZZ}{\ensuremath{\mathbb Z}\xspace}  
\newcommand{\NN}{\ensuremath{\mathbb N}\xspace}  
\renewcommand{\SS}{\ensuremath{\mathbb S}\xspace}  
\renewcommand{\P}{\ensuremath{\mathsf{P}}\xspace}
\newcommand{\NP}{\ensuremath{\mathsf{NP}}\xspace}
\newcommand{\FPT}{\ensuremath{\mathsf{FPT}}\xspace}
\newcommand{\W}[1]{\ensuremath{\mathsf{W[#1]}}\xspace}
\newcommand{\Sd}{\ensuremath{\mathsf{Sd}}\xspace}
\DeclareMathOperator{\tsquare}{top}
\DeclareMathOperator{\lsquare}{left}
\DeclareMathOperator{\bsquare}{bottom}
\DeclareMathOperator{\rsquare}{right}
\DeclareMathOperator{\csquare}{center}

\newcommand{\twosphere}{\textsc{2-dim-sphere}\xspace}
\newcommand{\deletiontotwosphere}{\textsc{Deletion-to-2-dim-sphere}\xspace}
\newcommand{\gridtiling}{\textsc{Grid Tiling}\xspace}

\def\DEF#1{\emph{#1}}

\begin{document}

\maketitle

\begin{abstract}
  We consider the problem of finding a subcomplex $\K'$ of a simplicial complex $\K$ such that $\K'$ is homeomorphic to the 2-dimensional sphere, $\SS^2$. We study two variants of this problem. The first asks if there exists such a $\K'$ with at most $k$ triangles, and we show that this variant is \W{1}-hard and, assuming ETH, admits no $\Oh(n^{o(\sqrt{k})})$ time algorithm. We also give an algorithm that is tight with regards to this lower bound. The second problem is the dual of the first, and asks if $\K'$ can be found by removing at most $k$ triangles from $\K$. This variant has an immediate $\Oh(3^{k}poly(|\K|))$ time algorithm, and we show that it admits a polynomial kernelization to $\Oh(k^2)$ triangles, as well as a polynomial compression to a weighted version with bit-size $\Oh(k \log k)$.
\end{abstract}

\section{Introduction}

Topology is the study of the properties of spaces that are preserved under continuous deformations of the space.
Intuitively, this can be summed up by the joke description of a topologist as a mathematician who
cannot tell the difference between a coffee mug and a doughnut, as each can be continuously deformed into the other.
In this paper we discuss manifolds, which are topological spaces that locally look like Euclidean space.
That is to say, every point in a $d$-manifold (without boundary) has a neighborhood homeomorphic to $\RR^d$.

The simplest manifold is the {\em $d$-sphere}, which is the boundary of a $(d+1)$-dimensional ball, where
the $(d+1)$-dimensional ball is simply a closed neighborhood of $\RR^{d+1}$.
In particular, the $2$-sphere which we will discuss is the $2$-dimensional surface of a $3$-dimensional ball (such as a soccer ball) that would live in the 3-space of our physical world.
The sphere is of interest as it relates to the {\em connected sum\/} operation on manifolds.
A connected sum of two $d$-manifolds is found by removing a $(d+1)$-dimensional ball from each manifold, and identifying the two components along the boundaries of the respective balls.
The $d$-sphere forms the identity element of this operation.
Finding embedded $d$-spheres that can separate a manifold into two non-trivial components is therefore the topological equivalent of the factorization of integers.
Indeed, a manifold that has no such spheres is called {\em prime}, and a {\em prime decomposition\/} of a manifold is a decomposition into prime manifolds. 

In this paper we will use {\em (abstract) simplicial complexes\/} to combinatorially represent manifolds.
At an informal level, a simplicial complex is a collection of simplices that are glued by identifying some faces.
In principle, the abstract simplicial complex does not live in any ambient space, although we can always represent it geometrically using spaces of high enough dimension.
A formal definition is given in \cref{section:preliminaries}.

Arguably, the most natural question to ask regarding a simplicial complex is whether it represents a manifold. 
The question is easy to answer for 2-manifolds: it suffices to check whether each edge is adjacent to exactly two triangles.
Additionally, in two dimensions we can recognize the manifold by calculating the Euler characteristic of the simplicial complex, itself a simple enumeration of vertices, edges and faces, and checking whether it is orientable.
Recognizing the manifold of a simplicial 3-complex is far harder, even for the 3-sphere~\cite{Schleimer11}. The recognition of 4-dimensional manifolds and the 5-sphere is an undecidable problem (see for example the appendix of~\cite{Nabutovsky1995}), while the recognition of the 4-sphere is a notorious open problem.
Interestingly, in dimensions 4 and higher there exists manifolds (such as the $E_8$ manifold) which can not even be represented as a simplicial complex~\cite{Freedman82}.

\subparagraph*{Our work.}
We return to a basic problem for 2-dimensional simplicial complexes: does a given simplicial complex contain a subcomplex that is (homeomorphic to) a 2-sphere? The problem is known to be \NP-hard, 
and we study its parameterized complexity with respect to the solution size (number of triangles in the subcomplex) and its dual (number of triangles not in the subcomplex); we begin with the former problem.

\begin{quote}
	\twosphere\\
    \textbf{Input:} A pair $(\K,k)$ where 
		$\K$ is a 2-dimensional simplicial complex and 
    	$k$ is a positive integer.\\
    \textbf{Question:} Does $\K$ contain a subcomplex with at most $k$ triangles
    	that is homeomorphic to the 2-dimensional sphere?
\end{quote}

We show that this problem is \W{1}-hard with respect to $k$. In fact we show that, 
assuming the Exponential Time Hypothesis (ETH;\@ see preliminaries), the problem cannot be solved in
$n^{o(\sqrt{k})}$ time. ETH implies a core hypothesis of parameterized complexity, namely that $\FPT\neq\W{1}$ (comparable to the hypothesis that $\P\neq\NP$). Together with its twin SETH (the Strong Exponential Time Hypothesis) it is known to imply a wide range of lower bounds, often matching known algorithmic results, for various \NP-hard problems. (To note, a very active branch of research uses SETH for tight lower bounds for problems in $\P$.)

\begin{theorem}\label{theorem:main:lowerbound}
The \twosphere problem is \W{1}-hard with respect to parameter $k$ and, unless ETH fails, it has no $f(k)n^{o(\sqrt{k})}$-time algorithm for any computable function $f$.
\end{theorem}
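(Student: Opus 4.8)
\textbf{The plan} is to reduce from \gridtiling. Recall that an instance of \gridtiling is given by an integer $n$ and $k^2$ nonempty sets $S_{i,j}\subseteq [n]\times[n]$, one for each $(i,j)\in[k]\times[k]$, and asks whether one can pick $s_{i,j}=(a_{i,j},b_{i,j})\in S_{i,j}$ so that $a_{i,j}=a_{i,j+1}$ whenever $j<k$ (the first coordinates agree along each row) and $b_{i,j}=b_{i+1,j}$ whenever $i<k$ (the second coordinates agree along each column). This problem is known to be \W{1}-hard with respect to $k$ and, unless ETH fails, to admit no $f(k)\,n^{o(k)}$-time algorithm. Given such an instance I would build, in polynomial time, a $2$-complex $\K$ and an integer $k'$ with $k'=\Theta(k^2)$ such that $\K$ has a subcomplex homeomorphic to $\SS^2$ with at most $k'$ triangles if and only if the \gridtiling instance is a yes-instance. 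Since $k'=\Theta(k^2)$ is a function of $k$ alone, this already gives \W{1}-hardness; and an $f(k')\,n^{o(\sqrt{k'})}$-time algorithm for \twosphere would yield a $g(k)\,n^{o(k)}$-time algorithm for \gridtiling for some computable $g$, contradicting ETH.

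\textbf{The construction.} I would realize the intended $2$-sphere as a triangulated surface laid out as a $k\times k$ array of quadrilateral \emph{patches} $P_{i,j}$: topologically one takes two copies of a triangulated $[0,k]\times[0,k]$ square, subdivided into unit cells, glued along their common boundary circle (the double of a disk, hence $\SS^2$), with $P_{i,j}$ being the pair of cells at position $(i,j)$. Into each patch I insert a \emph{choice gadget}: for every $(a,b)\in S_{i,j}$ a triangulated disk $D^{i,j}_{a,b}$, all of these disks sharing the same boundary circle $\partial P_{i,j}$ and all having the same number $t=\Oh(1)$ of triangles. The gadget is arranged so that $D^{i,j}_{a,b}$ attaches to the patch boundary only along the ``$a$-th track'' on the east and west sides and only along the ``$b$-th track'' on the north and south sides, where each side shared by two neighbouring patches has been refined into $n$ parallel tracks. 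Since a closed surface has empty boundary, wherever a closed subcomplex enters $P_{i,j}$ it must contain an \emph{entire} disk $D^{i,j}_{a,b}$ (there is no partial way to span a patch), and it must continue into each neighbouring patch along the \emph{same} track; horizontal neighbours therefore carry the same first coordinate and vertical neighbours the same second coordinate. Thus a consistent global choice $(s_{i,j})_{i,j}$ assembles the disks $D^{i,j}_{s_{i,j}}$ into a subcomplex homeomorphic to $\SS^2$ with exactly $k^2 t=k'$ triangles, and conversely every grid-respecting closed surface encodes such a choice.

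\textbf{Correctness and the main obstacle.} The forward implication is immediate from the construction. For the converse I must rule out every \emph{unintended} closed subsurface of $\K$ — a small sphere hiding inside one or two gadgets, a torus, or a surface that skips, doubles, or only partially covers patches — since only then does ``$\SS^2$-subcomplex of size $\le k'$'' become equivalent to ``$\SS^2$-subcomplex exists'', hence to the \gridtiling instance being a yes-instance. This is the technical heart of the reduction. The argument I would give is a local analysis at edges: I would design the gadget so that the only edges of $\K$ that can lie in the interior of a closed subsurface (i.e.\ be incident to exactly two of its triangles) are the edges internal to a single disk $D^{i,j}_{a,b}$ and the track edges, and so that any cycle of triangles running along a track can be ``capped off'' only by continuing all the way around the patch grid. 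A short combinatorial propagation argument then forces any closed subsurface to be connected, to cover each of the $k^2$ patches exactly once, to use matching tracks across every patch boundary, and therefore to be one of the grid-spheres above, with exactly $k'$ triangles. Making the gadget rigid enough that these ``no shortcut'' properties provably hold — in particular adding enough padding around the patch grid so that a closed subsurface cannot close up without wrapping around the whole double — is the step I expect to require the most care.
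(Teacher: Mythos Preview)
Your high-level plan is exactly the paper's: reduce from \gridtiling, build a $2$-complex whose ``front'' is a $k\times k$ array of choice gadgets glued to a fixed ``back'', and set $k'=\Theta(k^2)$. The correctness/ETH transfer you sketch is also the paper's.

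Where you diverge, and where there is a genuine gap, is in the gadget. Your description contains an internal tension: you say all disks $D^{i,j}_{a,b}$ ``share the same boundary circle $\partial P_{i,j}$'', yet also that $D^{i,j}_{a,b}$ ``attaches only along the $a$-th track'' on the horizontal sides (out of $n$ parallel tracks). These cannot both hold. If the sides are physically refined into $n$ tracks, each disk has boundary length $\Theta(n)$ and hence $\Theta(n)$ triangles, so $k'$ is not a function of $k$ alone and the ETH transfer breaks. If instead the disks really all have the same constant-length boundary, then there is no geometric ``track $a$'' for them to pick out, and you have not explained how $a$ and $b$ are encoded at all. The paper resolves this by making every choice square a copy of one fixed $16$-triangle complex $\sigma$; the value $a$ is encoded \emph{purely through identifications}: $\rsquare(\sigma(a,b,i,j))$ is glued to $\lsquare(\sigma(a,b',i,j+1))$ precisely when the first coordinates agree. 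There are no $n$ physical tracks anywhere; the complex has $\Oh(k^2n^2)$ triangles but every candidate sphere has exactly $16k^2+8k$.

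The second gap is the part you yourself flag as ``the step requiring the most care'': forbidding a sphere that uses two disks in the same cell or that skips cells. Your proposal to achieve this by padding and a propagation argument is vague, and padding alone does not obviously prevent a small sphere living inside a single gadget. The paper's device is much cleaner and worth knowing: for each $(i,j)$ it identifies the \emph{centre vertices} $\csquare(\sigma(a,b,i,j))$ over all $(a,b)\in S_{i,j}$ into a single point. Then if a putative $2$-sphere $\K'$ contained two squares $\sigma(a,b,i,j)$ and $\sigma(c,d,i,j)$, that common centre would have no disk neighbourhood in $\K'$, contradicting that $\K'$ is a manifold. This single identification does all the heavy lifting; once uniqueness per cell is forced, a short boundary-propagation argument (an edge of $\tsquare(\sigma(a,b,i,j))$ must meet a second triangle, and the centre trick rules out a second square in cell $(i,j)$, so it must come from cell $(i-1,j)$ with the same $b$) shows every cell is used and the coordinates match. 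No padding and no global wrapping argument is needed.
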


Note that the related problem variant of finding a subcomplex with \emph{at least $k$} triangles that is homeomorphic to the 2-dimensional sphere is \NP-hard for $k=0$, as this is simply \NP-hard problem of testing whether there is any subcomplex that is homeomorphic to the $2$-sphere. (Note that hardness for finding a subcomplex with at most $k$ triangles also implies hardness for finding one with \emph{exactly $k$} triangles.)

We complement \cref{theorem:main:lowerbound} by giving an algorithm for \twosphere that runs in $n^{\Oh(\sqrt{k})}$ time, which is essentially tight; it can also be used to find a solution with exactly $k$ triangles.

\begin{theorem}\label{theorem:main:upperbound}
The \twosphere problem can be solved in time $2^{\Oh(k)}n^{\Oh(\sqrt{k})}$.
\end{theorem}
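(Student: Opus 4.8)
The plan is to reduce to many instances of a constrained subcomplex-isomorphism problem, one per combinatorial ``shape'' of the target sphere, and to exploit that every such shape, being a triangulated $\SS^2$, is planar and hence has treewidth $\Oh(\sqrt{k})$. First I record the elementary counting. If $\K'\subseteq\K$ is homeomorphic to $\SS^2$ and has $j\le k$ triangles, then every edge of $\K'$ lies in exactly two triangles, so $2|E(\K')|=3j$, and together with $\chi(\K')=2$ this forces $|E(\K')|=3j/2$ and $|V(\K')|=2+j/2$; in particular $j$ is even, $4\le j\le k$, the complex $\K'$ has $\Oh(k)$ vertices and edges, and its $1$-skeleton is a maximal planar graph. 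Let $\mathcal{H}$ be the set of isomorphism types of simplicial complexes that are homeomorphic to $\SS^2$ and have at most $k$ triangles. Since each has at most $2+k/2$ vertices and a planar $1$-skeleton, $|\mathcal{H}|=2^{\Oh(k)}$, and $\mathcal{H}$ can be generated in $2^{\Oh(k)}\,\mathrm{poly}(k)$ time (for instance by building all triangulations of $\SS^2$ from the boundary of the tetrahedron by vertex splits and discarding isomorphic copies, which is feasible because planar graph isomorphism is polynomial). By the counting above, $(\K,k)$ is a yes-instance precisely when $\K$ contains a subcomplex isomorphic to some $H\in\mathcal{H}$; and since every edge of such an $H$ lies in a triangle of $H$, this holds iff there is an injective map $f:V(H)\to V(\K)$ with $f(\{a,b,c\})$ a triangle of $\K$ for every triangle $\{a,b,c\}$ of $H$ (the subcomplex of $\K$ generated by the images $f(\{a,b,c\})$ is then isomorphic to $H$, hence a $2$-sphere). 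For the ``exactly $k$'' variant one restricts $\mathcal{H}$ to complexes with exactly $k$ triangles.

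The per-shape subroutine is the heart of the argument. Fix $H\in\mathcal{H}$ with $v:=|V(H)|=\Oh(k)$; its $1$-skeleton $G_H$ is planar, so it admits a tree decomposition of width $w=\Oh(\sqrt{v})=\Oh(\sqrt{k})$ with $\Oh(v)$ bags, computable in polynomial time, which I refine into a nice tree decomposition. The key point is that every triangle $\{a,b,c\}$ of $H$ is a clique of $G_H$ and therefore lies inside some bag, so the requirement ``$f(\{a,b,c\})$ is a triangle of $\K$'' is a constraint on a single bag; this is what makes the $\Oh(\sqrt{k})$ width pay off. To find an injective triangle-preserving $f$ I use color coding: color $V(\K)$ with $v$ colors and look for a colorful $f$ (one whose $v$ vertices receive pairwise distinct colors, which in particular forces $f$ injective), derandomized with an $(n,v)$-perfect hash family of size $2^{\Oh(v)}\log n=2^{\Oh(k)}\log n$. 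Over the nice tree decomposition of $H$ I run the standard dynamic program whose state at a node $t$ consists of (i) an injective, color-distinct map $\phi$ from the bag of $t$ to $V(\K)$ that is triangle-preserving on the triangles of $H$ inside that bag, and (ii) the set $S\subseteq[v]$ of colors used by the already-processed part of $H$; the table records whether $\phi$ extends to a color-distinct triangle-preserving embedding of the processed subcomplex using exactly the colors in $S$, with acceptance ``$S=[v]$'' at the root. Introduce, forget, and join transitions are routine. The number of states per node is at most $n^{w+1}\cdot 2^{v}=2^{\Oh(k)}n^{\Oh(\sqrt{k})}$, and each is computed in $2^{\Oh(k)}n^{\Oh(\sqrt{k})}$ time (at a join node one iterates over the $\le 2^{v}$ splittings of $S$ compatible with the bag colors), so one run costs $2^{\Oh(k)}n^{\Oh(\sqrt{k})}$; multiplying by the size of the hash family, deciding whether $\K$ contains a copy of $H$ takes $2^{\Oh(k)}n^{\Oh(\sqrt{k})}$ time.

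Running this subroutine for all $H\in\mathcal{H}$ and answering ``yes'' iff some run succeeds gives total time $|\mathcal{H}|\cdot 2^{\Oh(k)}n^{\Oh(\sqrt{k})}=2^{\Oh(k)}n^{\Oh(\sqrt{k})}$, as claimed. The main obstacle is the per-shape subroutine: making the three requirements on $f$ --- global injectivity, preservation of every triangle, and the colorfulness bookkeeping --- coexist in a single dynamic program over the pattern's tree decomposition while keeping the state count at $2^{\Oh(k)}n^{\Oh(\sqrt{k})}$; the enabling observation is that each triangle of $H$ sits inside one bag because $G_H$ is a planar triangulation. The remaining ingredients --- the Euler-characteristic count, the $\Oh(\sqrt{v})$ treewidth bound for planar graphs, and the $2^{\Oh(k)}$-time generation of triangulations of $\SS^2$ --- are standard.
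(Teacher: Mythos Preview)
Your proof is correct and follows essentially the same strategy as the paper: enumerate the $2^{\Oh(k)}$ triangulations of $\SS^2$ with at most $k$ triangles, and for each one test subcomplex isomorphism via color-coding together with a dynamic program over a width-$\Oh(\sqrt{k})$ tree decomposition of the (planar) pattern. The only difference is cosmetic: the paper first encodes subcomplex isomorphism as color-preserving \emph{subgraph} isomorphism via the $1$-skeleton of the barycentric subdivision and then invokes Alon--Yuster--Zwick as a black box, whereas you work directly with the $1$-skeleton of $H$ and spell out the DP by hand; both routes yield the same bound.
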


For the dual problem, we are interested in the parameterized complexity relative to the number $k$ of triangles that are not in the solution (i.e., not in the returned subcomplex that is homeomorphic to the $2$-sphere). In other words, the question becomes that of deleting $k$ triangles (plus edges and vertices that are only incident with these triangles) to obtain a subcomplex that is homeomorphic with the $2$-sphere. Similarly to before, deleting \emph{at least $k$} triangles is \NP-hard for $k=0$ as that is just asking for existence of any subcomplex that is homeomorphic to the $2$-sphere. We consider the question of deleting \emph{at most $k$} triangles.

\begin{quote}
	\deletiontotwosphere\\
    \textbf{Input:} A pair $(\K,k)$ where 
		$\K$ is a 2-dimensional simplicial complex and 
    	$k$ is a positive integer.\\
    \textbf{Question:} Can we delete at most $k$ triangles in $\K$ so that the remaining
    	subcomplex is homeomorphic to the 2-dimensional sphere?
\end{quote}

There a simple $\Oh(3^{k}poly(|\K|))$ time algorithm for this problem: While there is an edge incident with at least three triangles, among any three of these triangles at least one must be deleted. Recursive branching on these configurations gives rise to search tree with at most $3^k$ leaves, each of which is an instance with (1) $k=0$ and at least one edge is shared by at least three triangles, or (2) $k\geq 0$ and each edge is shared by at most two triangles. The former instances can clearly be discarded, the latter can be easily solved in polynomial time: components (with enough connectivity) and a boundary can be discarded (updating budget $k$ accordingly); components without boundary have each edge being shared by exactly two triangles and we can efficiently test which ones are homeomorphic to the $2$-sphere (keeping the largest).

Knowing, thus, that \deletiontotwosphere is \emph{fixed-parameter tractable} for parameter $k$, we ask whether it admits a polynomial kernelization or compression, i.e., an efficient preprocessing algorithm that returns an equivalent instance of size polynomial in $k$. We prove that this is the case by giving, in particular, a compression to almost linear bit-size.

\begin{theorem}\label{theorem:other:compression}
The \deletiontotwosphere problem admits a polynomial kernelization to instances with $\Oh(k^2)$ triangles and bit-size $\Oh(k^2\log k)$ and a polynomial compression to weighted instances with $\Oh(k)$ triangles and bit-size $\Oh(k\log k)$.
\end{theorem}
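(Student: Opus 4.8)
The plan is to reduce $(\K,k)$ by a sequence of polynomial-time preprocessing rules to an equivalent instance whose non-trivial structure is concentrated near the few edges incident with three or more triangles, and then to shrink the large ``featureless'' pieces. First I would discard every simplex not contained in a triangle and then exhaustively apply the rule: if some edge is incident with exactly one triangle $\tau$, then $\tau$ lies in no closed surface, so delete $\tau$ and decrease $k$ by one (returning a trivial no-instance if $k$ goes below $0$). After this every edge is incident with $0$ or $\geq 2$ triangles; call an edge \emph{bad} if it is incident with $\geq 3$. The first observation is that in any yes-instance with solution deleting $D$, $|D|\le k$, a bad edge $e$ with $t_e$ incident triangles loses at least $t_e-2$ of them (in $\SS^2$ every edge meets exactly two triangles), and since each deleted triangle has three edges, $\sum_{e\text{ bad}}(t_e-2)\le 3k$; in particular there are at most $3k$ bad edges, and we may otherwise output a trivial no-instance.

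Next I would cut $\K$ along the bad edges: let the \emph{regions} be the connected components of the graph on triangles in which two triangles are adjacent when they share an edge of degree exactly $2$. The crucial structural lemma is that \emph{any solution deletes each region either entirely or not at all}: if a region contained both a deleted and a kept triangle, then along a path of degree-$2$ edges between them some such edge would end up incident with exactly one triangle, contradicting that the result is a closed surface. Moreover, since the result is homeomorphic to $\SS^2$, the geometric realization of every kept region is a subspace of $\SS^2$; hence no kept region has an interior vertex (one all of whose edges have degree $2$) whose link is not a single cycle, and the complex obtained from a kept region by separating its sheets at the bad edges is a disjoint union of planar, orientable surfaces with boundary. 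I would therefore \emph{force-delete} (removing the triangles, charging $|R|$ to $k$, and failing if $k$ would go negative) every region violating these two conditions, and — after first testing the trivial case in which keeping one closed $2$-sphere component and deleting everything else already solves the instance — every closed component, iterating to a fixed point. A symmetric argument on connectivity (at most one connected component can host the connected solution) lets us keep a single component. We are left with a connected complex in which every region is a bounded planar orientable pseudo-surface, each region meets at least one bad edge, and — counting triangle/bad-edge incidences two ways — both the number of regions and the total ``boundary length'' $\sum_{e\text{ bad}}t_e$ are $\Oh(k)$.

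For each remaining region $R$ I would replace its interior by a minimal triangulation that is homeomorphic to $R$ relative to its boundary subcomplex (the bad edges and their vertices, which are untouched) and that reproduces the arc structure of the vertex links at the bad vertices; standard triangulated-surface constructions give such a \emph{gadget} with $\Oh(\ell_R)$ triangles, where $\ell_R$ is the number of incidences between triangles of $R$ and bad edges, so that the total triangle count becomes $\Oh(\sum_R\ell_R)=\Oh(k)$. For the polynomial compression I attach weights to gadget triangles so that the gadget's total weight equals the original count $|R|$, capped at $k+1$ (any larger value is ``infinite'' for a budget of $k$); weights are then at most $k+1$, giving $\Oh(k)$ triangles and bit-size $\Oh(k\log k)$. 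For the unweighted kernelization I instead leave small regions ($|R|\le k$) untouched and, for large regions ($|R|>k$, which by the wholesale lemma must be kept anyway), pad the gadget with subdivisions up to exactly $k+1$ triangles; each of the $\Oh(k)$ regions then has $\Oh(k)$ triangles, for $\Oh(k^2)$ triangles and bit-size $\Oh(k^2\log k)$. Correctness in both directions follows from the wholesale lemma together with the facts that replacing a region by a gadget homeomorphic to it relative to its (literally unchanged) boundary does not change the homeomorphism type of any union of regions, and does not change deletion costs (exactly, in the weighted case; and in the unweighted case large regions are never deleted on either side while small regions are unchanged).

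The routine parts are the bookkeeping of the reduction rules and the two incidence counts. The technical heart — and the step I expect to be the main obstacle — is the region gadget: given a triangulated bounded planar orientable pseudo-surface $R$, I need a triangulation on $\Oh(\ell_R)$ triangles that is simultaneously homeomorphic to $R$ relative to its prescribed boundary \emph{and} induces the same family of link-arcs at each bad vertex, so that gluing behaves identically and, for instance, a pinched boundary vertex of $R$ that is ``healed'' by neighbouring regions in the original complex is still healed after shrinking. Making this gadget small, boundary-faithful, and link-faithful at once, and then verifying that the padding (respectively weighting) preserves equivalence to the original instance, is the part that needs care.
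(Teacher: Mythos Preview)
Your plan is correct and follows the same high-level arc as the paper: prune triangles with a free edge, bound the conflict structure in yes-instances, decompose the rest into pieces that any solution keeps or deletes wholesale, and replace large pieces by small homeomorphic gadgets (with weights for the compression and padding for the kernel). The one substantive difference is in the decomposition. The paper first isolates the set $\T$ of \emph{conflict triangles} (those touching an edge of multiplicity $\geq 3$), proves $|\T|\le 7k$ in yes-instances, and then works with the edge-connected components of $\K\setminus\T$; these are automatically $2$-manifolds with boundary, so after discarding anything that is not a punctured disk the replacement gadget is just a small punctured disk with the same boundary cycle(s), and the $\le 7k$ triangles of $\T$ are kept verbatim. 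Your ``regions'' instead partition \emph{all} triangles via good-edge adjacency, so a region may contain conflict triangles and can have several sheets meeting at a bad edge or a pinched bad vertex; this is exactly why your gadget must be link-faithful, which you rightly flag as the delicate step. Both routes give the same $\Oh(k)$ counts, but the paper's buys a cleaner gadget (plain punctured disks, no link bookkeeping) at the modest price of carrying $\T$ separately; your route avoids singling out $\T$ but pays for it with a pseudo-surface-aware replacement. One small point to tighten: your rejection test should be on $\sum_{e\text{ bad}}(t_e-2)\le 3k$ itself (not merely on the number of bad edges), since that is what actually bounds $\sum t_e$ and hence the total boundary length.
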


\subparagraph*{Related work.}
A sketch of NP-hardness for the \twosphere problem was 
given by Ivanov~\cite{MO} in a Mathoverflow question.

Our work is one of the few ones combining topology and fixed parameter tractability.
In this direction there have been recent results focused on algorithms in 3-manifold topology~\cite{Bagchi16,BurtonCourcelles,Burton15,Burton14,Maria16}.
The problem of finding a shortest 1-dimensional cycle $\ZZ_2$-homologous to a given cycle
in a 2-dimensional cycle was shown to be NP-hard by Chao and Freedman~\cite{ChenF11}. Erickson and Nayyeri~\cite{EricksonN11a} showed that the problem is fixed-parameter tractable for surfaces, when parameterized by genus of the surface. The result has been extended~\cite{BusaryevCCDW12} to arbitrary 2-dimensional simplicial complexes parameterized by the first Betti number. Finally, let us mention that deciding whether a graph (1-dimensional simplicial complex) can be embedded in surface of genus $g$ is fixed-parameter tractable with respect to the genus~\cite{KawarabayashiMR08,Mohar99}.

\subparagraph*{Organization.}
We begin with preliminaries on computational topology and parameterized complexity (\cref{section:preliminaries}). The proofs for \cref{theorem:main:lowerbound} and~\cref{theorem:main:upperbound} about \twosphere are given in \cref{section:main:hardness} and~\cref{section:main:algorithm}. The preprocessing result for \deletiontotwosphere, i.e., \cref{theorem:other:compression}, is proved in \cref{section:preprocessing}. We conclude in \cref{section:conclusion} with some open problems.

\section{Background and notation}\label{section:preliminaries}

For each positive integer $n$ we use $[n]$ to describe the set $\{1,\dots,n\}$.

\subparagraph{Topological background.}
We give a very succinct summary of the topological background we need
and refer the reader to~\cite[Chapter 1]{Matousek07} or~\cite[Chapter 1]{Munkres93a}
for a comprehensive introduction. The results we mention are standard and available in several books.

A {\em homeomorphism\/} between two topological spaces is a continuous mapping between the two spaces whose inverse is also continuous. If such a homeomorphism exists, we say that the two spaces are {\em homeomorphic}. Any topological property is invariant under homeomorphisms.

A {\em $d$-manifold\/} is a topological space where each point has a neighborhood homeomorphic to $\RR^d$ or the closed half-space $\{ (x_1,\dots,x_d)\in \RR^d \mid x_1\ge 0 \}$. A point of the manifold where no neighborhood is homeomorphic to $\RR^d$ is a {\em boundary point}. 
In this paper we focus on 2-manifolds, often called {\em surfaces}, which are locally equivalent to the Euclidean plane or a half-plane.  It is known that the boundary of a (compact) $2$-manifold is the union of finitely many $1$-manifolds (circles). A surface can be described by a collection of triangles and a collection of pairs of edges of triangles that are identified. If each edge appears in some pairing, then the surface has no boundary.

A {\em geometric $d$-simplex\/} is the convex hull of $d+1$ points in $\RR^{d'}$
that are not contained in any hyperplane of dimension $d-1$; this requires $d'\ge d$.
A {\em face\/} of simplex $\sigma$ is a simplex of a subset of the points defining $\sigma$.  
A {\em geometric simplicial complex\/} $\K$ is a collection of geometric simplices where each face of each simplex of $\K$ is also in $\K$, and any non-empty intersection of any two simplices of $\K$ is also in $\K$. The \DEF{carrier} of $\K$, denoted by $|| \K ||$, is the union of all the simplices in $\K$. A geometric simplicial complex $\K$ is a {\em triangulation\/} of $X$ if $X$ and $||\K||$ are homeomorphic.
Quite often we talk about properties of $\K$ when we mean properties of its carrier $||\K||$. For example, we may say that a geometric simplicial simplex $\K$ is homeomorphic to a topological space $X$ when we mean that $||\K||$ and $X$ are homeomorphic.

An {\em (abstract) simplicial complex\/} $\K$ is a finite family of sets with the property that any subset of any set of $\K$ is also in contained $\K$. 
An example of abstract simplicial complex is 
$\{ \emptyset, \{1 \}, \{2 \}, \{3 \}, \{4 \},
\{1,2 \},\{1,3\}, \{1,4 \},\{2,3 \}, \{3,4 \}, 
\{1,3,4\} \}$.
The singletons of $\K$ are called {\em vertices\/} and the set of vertices is denoted by $V(\K)$. We can assume without loss of generality that $V(\K)=[n]$ for some natural number $n$, as we already had in the previous example.
The {\em dimension\/} of the (abstract) simplicial complex $\K$ is 
$\max_{\sigma\in \K} |\sigma|-1$.

In this paper we focus on (abstract) simplicial complexes and \emph{we will remove the adjective ``abstract''} when referring to them.  
Here we are interested in $2$-dimensional simplicial complexes. We can describe them by giving either the list of all simplices or a list of the inclusion-wise maximal simplices. Since in dimension $2$ the length of these two lists differ by a constant factor, the choice is asymptotically irrelevant.
(For unbounded dimensions, this difference is sometimes relevant.)

A {\em geometric realization\/} of a simplicial complex $\K$ is an injection $f\colon V(\K)\rightarrow \RR^{d'}$ such that  
$\{ CH(f(\sigma))\mid \sigma\in \K\setminus\{\emptyset\}\}$ is a geometric simplicial complex, where $CH(\cdot)$ denotes the convex hull. It is easy to show that the carriers of any two geometric realizations of a simplicial complex are homeomorphic.
Abusing terminology, we will talk about properties of a simplicial complex when (the carrier of) its geometric realizations have the property. For example, we say that a simplicial complex $\K$ is triangulation of the 2-sphere when we mean that some geometric realization of $\K$ is a triangulation of the $2$-sphere (and thus all geometric realizations of $\K$ are triangulations of the $2$-sphere).

\subparagraph{Parameterized complexity.} Again, we just provide a very succinct summary. See the books by Cygan et al.~\cite{CyganFKLMPPS15} or by Downey and Fellows~\cite{DowneyF13} for recent comprehensive accounts.

A \emph{parameterized problem} is a language $\Q\subseteq\Sigma^*\times\NN$ where $\Sigma$ is any finite alphabet and $\NN$ denotes the non-negative integers; the second component $k$ of an instance $(x,k)\in\Sigma^*\times\NN$ is called its \emph{parameter}. A parameterized problem $\Q$ is \emph{fixed-parameter tractable} if there is an algorithm $A$, a constant $c$, and a computable function $f\colon\NN\to\NN$ such that $A$ correctly decides $(x,k)\in\Q$ in time $f(k)\cdot|x|^c$ for all $(x,k)\in\Sigma^*\times\NN$.
A \emph{kernelization} of a parameterized problem $\Q$ with \emph{size $h\colon\NN\to\NN$} is a polynomial-time algorithm $K$ that on input $(x,k)\in\Sigma^*\times\NN$ takes time polynomial in $|x|+k$ and returns an instance $(x',k')$ of size at most $h(k)$ such that $(x,k)\in\Q$ if and only if $(x',k')\in\Q$. If $h(k)$ is polynomially bounded then $K$ is a \emph{polynomial kernelization}. If the output of $K$ is instead an instance of any (unparameterized) problem $L'$ then we called it a \emph{(polynomial) compression}.

The prevalent method of showing that a parameterized problem $\Q'\subseteq\Sigma'^*\times\NN$ is \emph{not} fixed-parameter tractable is to give a \emph{parameterized reduction} from a problem $\Q\subseteq\Sigma^*\times\NN$ that is hard for a class called \W{1}, which contains the class $\FPT$ of all fixed-parameter tractable problems; it is assumed that $\FPT\neq\W{1}$. A parameterized reduction from $\Q$ to $\Q'$ is an algorithm $R$ that on input $(x,k)\in\Sigma^*\times\NN$ takes time $f(k)\cdot|x|^c$ and returns an instance $(x',k')\in\Sigma'^*\times\NN$ such that: $(x,k)\in\Q$ if and only if $(x',k')\in\Q'$ and such that $k'\leq g(k)$; here $f,g\colon\NN\to\NN$ are computable functions and $c$ is a constant, all independent of $(x,k)$.
Parameterized reductions can also be used to transfer lower bounds on the running time. A common starting point for this is the \emph{Exponential Time Hypothesis (ETH)} which posits that there is a constant $\delta_3>0$ such that no algorithm solves \textsc{$3$-SAT} in time $\Oh(2^{\delta_3n})$ where $n$ denotes the number of variables. In particular, this rules out subexponential-time algorithms for \textsc{$3$-SAT} and, by appropriate reductions, for a host of other problems.

\section{Hardness of 2-dim sphere}\label{section:main:hardness}

In this section we provide a proof for \cref{theorem:main:lowerbound}, namely that \twosphere is \W{1}-hard for parameter $k$ and, under ETH, admits no $\Oh(n^{o(\sqrt{k})})$ time algorithm. To obtain the result we give a polynomial-time reduction from the \gridtiling problem introduced by Marx~\cite{Marx12}.

\begin{quote}
	\gridtiling\\
    \textbf{Input:} A triple $(n,k,\S)$ where $n$ is a positive integer,
    	$k$ is a positive integer, and $\S$ is a tuple of $k^2$ 
        nonempty sets $S_{i,j}\subseteq [n] \times [n]$, 
		where $i,j \in [k]$. \\
    \textbf{Question:} Can we choose for each $i,j \in [k]$ 
    	a pair $(a_{i,j},b_{i,j}) \in S_{i,j}$ such that 
        $a_{i,j}=a_{i,j+1}$ for all $i\in [k]$, $j\in [k-1]$, and
        $b_{i,j}=b_{i+1,j}$ for all $i\in [k-1]$, $j\in [k]$?
\end{quote}

It is convenient to visualize the input elements as displayed in a 
$(k\times k)$-tiled square. The squares are indexed like matrices: the top left tile
corresponds to the index $(i,j)=(1,1)$ and the bottom left tile 
corresponds to the index $(i,j)=(k,1)$. Inside the $(i,j)$-tile we put
the elements of $S_{i,j}$. 
An example of an instance for \gridtiling is given in \cref{fig:gridtiling}.
The task is to select a 2-tuple in each tile such
that the selected elements in each row have the same first coordinate
and the selected elements in each column have the same second coordinate.
The following lower bound is known for \gridtiling.

\begin{figure}
\centering
	\includegraphics[page=1]{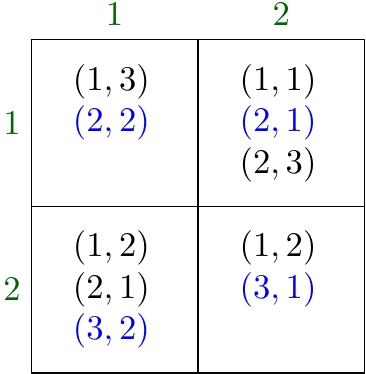}
	\caption{A yes-instance of \gridtiling with $n=3$, $k=2$, 
		and $S_{1,1}=\{ (1,3), (2,2)\}$; the blue entries constitute a solution.}
	\label{fig:gridtiling}
\end{figure}

\begin{theorem}[\cite{Marx12}]\label{theorem:marx:gridtilinglowerbound}
\gridtiling is W[1]-hard and, unless ETH fails, it has no $f(k)n^{o(k)}$-time algorithm for any computable function $f$.
\end{theorem}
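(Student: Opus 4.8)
The plan is to prove \cref{theorem:marx:gridtilinglowerbound} by a polynomial-time parameterized reduction from \textsc{Multicolored Clique} (equivalently, from $k$-\textsc{Clique}), which is \W{1}-hard and, unless ETH fails, has no $f(k)n^{o(k)}$-time algorithm. Recall that an instance of \textsc{Multicolored Clique} is a graph $G$ whose vertices are partitioned into color classes $V_1,\dots,V_k$ of size $n$ each, and the question is whether $G$ has a clique using exactly one vertex per class; identifying each $V_i$ with $[n]$, a vertex of $V_i$ is an element of $[n]$ and an edge between $V_i$ and $V_j$ is a pair in $[n]\times[n]$.

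From $G$ I would build the \gridtiling instance $(n,k,\S)$ by setting the diagonal sets to $S_{i,i}:=\{(v,v)\mid v\in[n]\}$ and, for $i\ne j$, the off-diagonal sets to $S_{i,j}:=\{(u,v)\mid u\in V_i,\ v\in V_j,\ uv\in E(G)\}$; if some off-diagonal set is empty then $G$ trivially has no multicolored $k$-clique and we output a fixed no-instance of \gridtiling instead. This runs in time polynomial in $|G|$, it keeps both the parameter and the ground-set size equal to $k$ and $n$ respectively, and it is therefore a parameterized reduction that also preserves the instance size up to a polynomial factor.

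For correctness I would use the following normal form of a \gridtiling solution: the first coordinate is forced to be constant along each row, say equal to $r_i$ in row $i$, and the second coordinate constant along each column, say equal to $c_j$ in column $j$, so the pair picked in cell $(i,j)$ is exactly $(r_i,c_j)$. The diagonal constraint $(r_i,c_i)\in S_{i,i}$ forces $r_i=c_i=:x_i\in[n]$, and then the off-diagonal constraint $(r_i,c_j)=(x_i,x_j)\in S_{i,j}$ says exactly that $x_ix_j\in E(G)$ for all $i\ne j$; hence $\{x_1,\dots,x_k\}$ is a multicolored clique of $G$, and conversely any multicolored clique $\{x_1,\dots,x_k\}$ with $x_i\in V_i$ yields the valid choice $(x_i,x_j)$ in every cell $(i,j)$. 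This gives the equivalence of the two instances.

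Combining the equivalence with the parameter- and size-preservation yields \W{1}-hardness of \gridtiling, and a hypothetical $f(k)n^{o(k)}$-time algorithm for it would, through the reduction, solve $k$-\textsc{Clique} in time $f'(k)n^{o(k)}$, contradicting ETH. I do not expect a real obstacle here: the gadget is a single routine construction. The only care needed is the bookkeeping for empty off-diagonal sets and checking that the standard reduction from $k$-\textsc{Clique} to \textsc{Multicolored Clique} keeps the parameter at $k$ and multiplies the number of vertices by only a factor $k$, so that the $n^{o(k)}$ shape of the bound is preserved.
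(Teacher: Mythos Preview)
The paper does not give its own proof of \cref{theorem:marx:gridtilinglowerbound}; it merely quotes the result from~\cite{Marx12} and uses it as a black box. There is therefore nothing in the paper to compare your argument against.

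That said, your proposal is correct and is precisely the standard reduction (as presented, e.g., in~\cite{CyganFKLMPPS15}): encode vertices on the diagonal cells via $S_{i,i}=\{(v,v)\}$ and edges on the off-diagonal cells via $S_{i,j}=\{(u,v)\mid uv\in E(G)\}$, then use the row/column constancy of the two coordinates to force a single vertex $x_i$ per color class and an edge between every pair. The bookkeeping you flag (empty off-diagonal sets, and the $k$-\textsc{Clique} $\to$ \textsc{Multicolored Clique} step blowing up $n$ by only a factor of $k$) is exactly the right set of caveats, and neither causes trouble for the $n^{o(k)}$ bound.
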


Consider an instance $(n,k,\S)$ of \gridtiling. We are going to construct
an equivalent instance $(\K,k')$ to \twosphere where $k'=\Theta(k^2)$. 

Let $\sigma$ be the simplicial complex shown in \cref{fig:square}, left.
It is a triangulation of a square with a middle vertex, denoted $\csquare(\sigma)$.
We denote the consecutive 2-edge paths on the boundary as $\lsquare(\sigma)$, $\tsquare(\sigma)$, $\rsquare(\sigma)$ and $\bsquare(\sigma)$.
The orientation of the path, indicated with an arrow,
defines the way we glue in later steps of the construction.
In our figures we will always orient the squares to match these names in the intuitive way.
For our construction, the important property of $\sigma$ is that there is no triangle containing $\csquare(\sigma)$ and one boundary edge and that there 
is no triangle containing two boundary edges of $\sigma$.

\begin{figure}
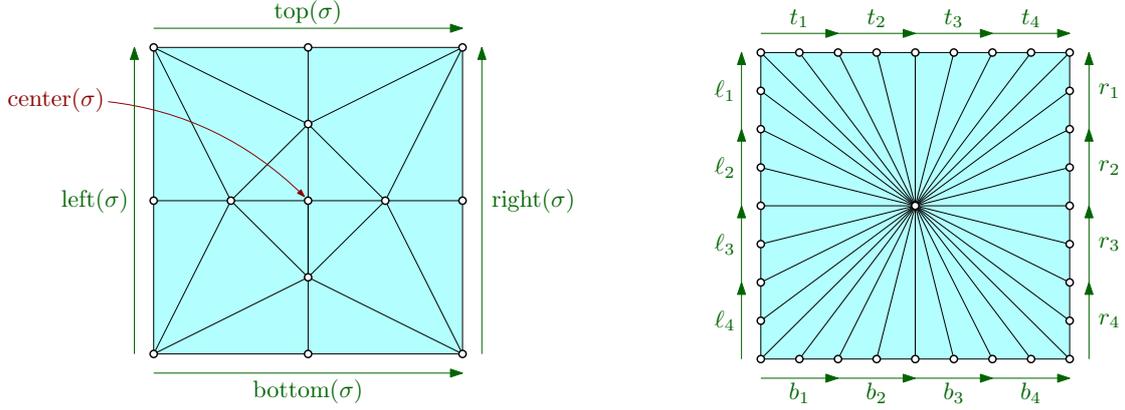

\centering
	\includegraphics[page=2,scale=.9]{figures}\label{fig:square}
\hfill
	\includegraphics[page=3,scale=.9]{figures}\label{fig:back}
	\caption{Left: The triangulated square $\sigma$.
			Right: back sheet when $k=4$.}
\end{figure}

For each $(a,b)$ in each $S_{i,j}$ we make a new copy of $\sigma$ and
denote it by $\sigma(a,b,i,j)$. We make some identifications, 
according to the following rules; \cref{fig:identifications} may be useful:
\begin{itemize}
    \item For each $i\in [k]$, $j\in [k-1]$, and $a\in [n]$,
		we identify together all the 2-edge paths 
		$\rsquare(\sigma(a,b,i,j))$, where $(a,b)\in S_{i,j}$,
		and all the 2-edge paths $\lsquare(\sigma(a,b',i,j+1))$, where $(a,b')\in S_{i,j+1}$.
		Thus, for each $i,j,a$ we have identified 
		$|\{ b\in [n]\mid (a,b)\in S_{i,j}\}|+|\{ b'\in [n]\mid (a,b')\in S_{i,j+1}\}|$
		2-edge paths into a single one.
    \item For each $i\in [k-1]$, $j\in [k]$, and $b\in [n]$,
		we identify together all the 2-edge paths  
		$\bsquare(\sigma(a,b,i,j))$, where $(a,b)\in S_{i,j}$,
		and all the 2-edge paths  $\tsquare(\sigma(a',b,i+1,j))$, 
		where $(a',b)\in S_{i+1,j}$.
		Thus, for each $i,j,b$ we have identified 
		$|\{ a\in [n]\mid (a,b)\in S_{i,j}\}|+|\{ a'\in [n]\mid (a',b)\in S_{i+1,j}\}|$
		2-edge paths into a single one.
	\item For each $i,j\in [n]$, we identify the vertices $\csquare(\sigma(a,b,i,j))$ 
		over all $(a,b)\in S_{i,j}$. Thus, we identified $|S_{i,j}|$ vertices into a 
		single one.
\end{itemize}

\begin{figure}[h]
\centering
	\includegraphics[page=4,scale=.7]{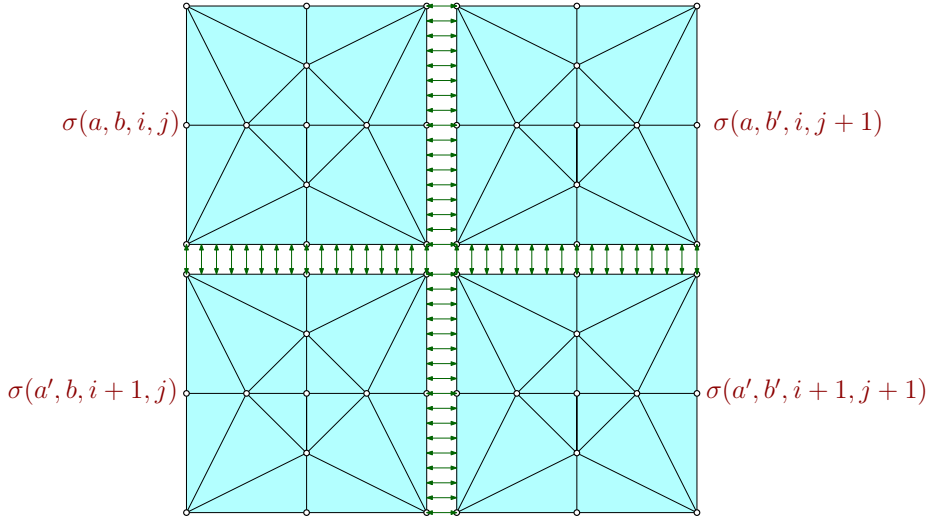}
	\caption{Example showing some identifications of 2-edge paths.}
	\label{fig:identifications}
\end{figure}

To finalize the construction, we triangulate a square such that it has $2k$ edges on
each side, as shown in \cref{fig:back}, right. 
We will refer to this simplicial complex as the \emph{back sheet}.
We split the boundary of the square into $2$-edge paths and label them,
in a clockwise traversal of the boundary of the square, by
$t_1,\dots,t_k$, $r_1,\dots,r_k$, $b_k,\dots,b_1$, and $\ell_k,\dots,\ell_1$. 
(We use $t$ as intuition for top, $r$ as intuition for right, etc.) 
Note that the indices for $b$ and $\ell$ run backwards. 
In the figure we also indicate the orientation of the 2-edge paths,
that are relevant for the forthcoming identifications.

Then we make the following additional identifications.
\begin{itemize}
    \item For each $i\in [n]$ and each $(a,b)\in S_{i,1}$, 
		we identify $\lsquare(\sigma(a,b,i,1))$ and $\ell_i$. 
    \item For each $i\in [n]$ and each $(a,b)\in S_{i,k}$, 
		we identify $\rsquare(\sigma(a,b,i,k))$ and $r_i$. 
    \item For each $j\in [n]$ and each $(a,b)\in S_{1,j}$, 
		we identify $\tsquare(\sigma(a,b,1,j))$ and $t_j$. 
    \item For each $j\in [n]$ and each $(a,b)\in S_{k,j}$, 
		we identify $\bsquare(\sigma(a,b,k,j))$ and $b_j$. 
\end{itemize}

Note that whenever we identify the endpoints of two edges in pairs,
we also identified the edges. Thus, we have constructed 
a simplicial complex. (If it is not obvious to the reader 
that we have a simplicial complex, we could always use barycentric
subdivisions, which will be introduced below,
to ensure that indeed we have a simplicial complex.)
Let $\K=\K(n,k,\S)$ denote the resulting simplicial complex.
Set $k'=16\cdot k^2 + 8k$. With the following lemmas we prove that $(\K,k')$ is yes for \twosphere if and only if $(n,k,\S)$ is yes for \gridtiling.

\begin{lemma}
\label{lemma:gridtiling-to-2sphere}
If $(n,k,\S)$ is a yes-instance for \gridtiling, then $\K$ contains a subcomplex with $k'$ triangles that is homeomorphic to the $2$-sphere.
\end{lemma}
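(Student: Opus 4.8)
The plan is to turn a \gridtiling solution into an explicit two-sphere subcomplex of $\K$, built entirely out of the ``front sheet'' of chosen squares together with the back sheet. First I would fix a solution $(a_{i,j},b_{i,j})\in S_{i,j}$ of the given \gridtiling instance, with the property that $a_{i,j}$ depends only on the row $i$ (call it $a_i$) and $b_{i,j}$ depends only on the column $j$ (call it $b_j$), which is exactly the consistency guaranteed by the \gridtiling constraints. For each tile $(i,j)$ I take the single square $\sigma(a_i,b_j,i,j)$; this gives $k^2$ copies of $\sigma$. I then claim that the subcomplex $\K'$ consisting of these $k^2$ squares together with the back sheet is homeomorphic to $\SS^2$.

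The key step is to verify the gluing. The square $\sigma(a_i,b_j,i,j)$ has a right path $\rsquare(\sigma(a_i,b_j,i,j))$ that, by the first identification rule, is identified with $\lsquare(\sigma(a_i,b',i,j+1))$ for every $b'$ with $(a_i,b')\in S_{i,j+1}$ — in particular with $\lsquare(\sigma(a_i,b_{j+1},i,j+1))$, since $a_{i,j+1}=a_i$. So consecutive chosen squares in a row are glued along an edge path exactly as in a grid; symmetrically the $\bsquare$/$\tsquare$ identifications glue consecutive chosen squares in a column, using $b_{i+1,j}=b_j$. The outer boundary paths of the chosen squares on the four sides of the $k\times k$ block are identified with $\ell_i$, $r_i$, $t_j$, $b_j$ of the back sheet, by the final four identification rules. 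Hence $\|\K'\|$ is the union of a $k\times k$ grid of squares (a big disk, topologically) with a second disk (the back sheet) glued along their common boundary circle — which is $\SS^2$. I must also check that no \emph{other} triangles of $\K$ accidentally lie in $\K'$: a triangle of $\K$ is either inside a single square $\sigma(a,b,i,j)$ or inside the back sheet (there are no triangles spanning two boundary edges of a $\sigma$, nor a boundary edge and the center, by the stated property of $\sigma$, and likewise none crossing between a square and the back sheet since they meet only along edges). Since each chosen square $\sigma(a_i,b_j,i,j)$ contributes its full set of triangles and distinct chosen squares are never fully identified (they are glued only along boundary edge-paths), $\K'$ really is the disjoint-union-then-glue I described, and its triangle count is $k^2\cdot|\sigma|+(\text{back sheet})$.

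Finally I would count triangles to match $k'=16k^2+8k$. The square $\sigma$ in \cref{fig:square} is triangulated with $16$ triangles (four $2\times 2$ corner blocks meeting at $\csquare(\sigma)$), so the $k^2$ chosen squares contribute $16k^2$. The back sheet is a square with $2k$ edges per side; reading off its triangulation from \cref{fig:back} it has $8k$ triangles on its ``frame'' that are not already part of any $\sigma$ after identification — more precisely, after the boundary identifications the back sheet's interior triangulation adds exactly $8k$ triangles, giving the total $16k^2+8k=k'$. (The precise figure-dependent count is the one routine part I would just read off the figures.) The main obstacle is not any single hard idea but making the bookkeeping airtight: confirming that the four boundary identification rules genuinely seal the $k\times k$ block of chosen squares against the back sheet along a single circle with no pinch points or extra identifications, so that the result is a manifold and in fact $\SS^2$ rather than some sphere-with-handles or wedge; barycentric subdivision, as the authors note, can be invoked if one worries about the object being a genuine simplicial complex, and it does not affect the homeomorphism type or (up to the constant in $k'$) the triangle count.
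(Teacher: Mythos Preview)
Your proposal is correct and follows essentially the same approach as the paper: select the $k^2$ squares $\sigma(a_{i,j},b_{i,j},i,j)$ corresponding to a \gridtiling solution, argue that the identifications glue them into a $k\times k$ disk, attach the back sheet along the common boundary circle, and count $16k^2+8k$ triangles. The one substantive difference is in the verification that the result is $\SS^2$: you argue topologically (two disks glued along a circle), whereas the paper, after noting exactly this intuition, backs it up with an explicit Euler-characteristic computation ($8k^2+4k+2$ vertices, $24k^2+12k$ edges, $16k^2+8k$ triangles, giving $\chi=2$). Your ``two disks'' argument is fine provided you have already checked the object is a closed orientable surface (no pinch points at identified vertices, every edge in exactly two triangles), which you do flag as the bookkeeping to be done; the paper's $\chi$ computation serves the same certifying role more mechanically.
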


\begin{proof}
	Because $(n,k,\S)$ is a yes-instance for \gridtiling,
	there exist pairs $(a_{i,j},b_{i,j})\in S_{i,j}$, where $i,j\in [k]$,
	such that $a_{i,j}=a_{i,j+1}$ (for $i\in [k],j\in [k-1]$)
	and $b_{i,j}=b_{i+1,j}$ (for $i\in [k-1],j\in [k]$).
	
	Consider the subcomplex $\widetilde\K$ of $\K$ induced by the squares
	$\sigma(a_{i,j},b_{i,j},i,j)$, where $i,j\in [k]$.
	During the identifications we have glued the square $\sigma(a_{i,j},b_{i,j},i,j)$
	to the square $\sigma(a_{i,j+1},b_{i,j+1},i,j+1)$ when making the identification
	$\rsquare(\sigma(a_{i,j},b_{i,j},i,j))= \lsquare(\sigma(a_{i,j+1},b_{i,j+1},i,j+1))$
	because $a_{i,j}=a_{i,j+1}$ (for $i\in [k],j\in [k-1]$).
	Similarly, we have glued $\sigma(a_{i,j},b_{i,j},i,j)$
	to $\sigma(a_{i+1,j},b_{i+1,j},i+1,j)$ when making the identification
	$\bsquare(\sigma(a_{i,j},b_{i,j},i,j))= \tsquare(\sigma(a_{i+1,j},b_{i+1,j},i+1,j))$
	because $b_{i,j}=b_{i+1,j}$ (for $i\in [k-1],j\in [k]$).
	Thus $\widetilde\K$ is a ``big square'' obtained by gluing $k^2$ copies of $\sigma$
	in a $(k\times k)$-grid-like way.
	Together with the back sheet, that is glued to the boundary of $\widetilde\K$,
	we get a triangulation of the $2$-sphere. 
	Since each square $\sigma(\cdot)$ has $16$ triangles and the back sheet has
	$8k$ triangles, the resulting triangulation has $16k^2 + 8k$ triangles.

	While we already claimed it and it is intuitively clear that the manifold we constructed is a $2$-sphere, 
	a formal argument can be carried out showing that this triangulation has Euler characteristic 2.
	For this we have to count the number of vertices and edges of the triangulation, which we do
	as we ``build'' the manifold adding the $k^2$ squares on the front and then adding the back sheet. 
	For the first of these ($\sigma(a_{1,1},b_{1,1},1,1)$), we can count 13 vertices and 28 edges.
	For each $1 < j \leq k$, $\sigma(a_{1,j},b_{1,j},1,j)$ (and respectively $\sigma(a_{j,1},b_{j,1},j,1)$) has 10 vertices and 
	26 edges as the left-most (respectively top-most) vertices and edges are counted in an earlier square.
	Similarly, the remaining $(k-1)^2$ squares $\sigma(a_{i,j},b_{i,j},i,j)$, for $i,j > 1$, have 8 vertices and 24 edges.
	Lastly, the rear square, the back sheet, has $1$ vertex and $8k$ edges as the ``outer'' edges are already 
	counted when we considered the front squares.
	This gives a total of 
	\[ 
		1 + 13 + 2(k-1)\times 10 + (k-1)^{2}\times 8 ~=~ 8k^2 + 4k + 2
	\] vertices and 
	\[
		8k + 28 + 2(k-1)\times 26 + (k-1)^2\times 24 ~=~ 24k^2 + 12k
	\] edges.
	Thus, the Euler characteristic is
	\[
		(8k^2 + 4k + 2) - (24k^2 + 12k) + (16k^2 + 8k) ~=~ 2. \qed
	\]
\end{proof}

\begin{lemma}\label{lemma:2sphere-to-gridtiling}
  If $\K$ contains a subcomplex $\K'$ homeomorphic to the $2$-sphere,
  then $(n,k,\S)$ is a yes-instance for \gridtiling.
\end{lemma}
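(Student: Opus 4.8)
The plan is to show that any subcomplex $\K'$ of $\K$ homeomorphic to $\SS^2$ must "use up" exactly one square $\sigma(a,b,i,j)$ per tile $(i,j)$, in a consistent way, and read off a \gridtiling solution from the indices. First I would examine the local structure of $\K$ around the center vertices $\csquare$. Recall that in $\sigma$ the center vertex $\csquare(\sigma)$ meets only the four triangles of that square, and no triangle contains $\csquare(\sigma)$ together with a boundary edge. After the identifications, the center vertex $c_{i,j}$ (the common image of all $\csquare(\sigma(a,b,i,j))$ over $(a,b)\in S_{i,j}$) has a link that is the disjoint-ish union of the four-edge cycles coming from each copy. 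For $\K'$ to be a $2$-manifold at $c_{i,j}$, the link of $c_{i,j}$ in $\K'$ must be a single cycle; since the only triangles incident to $c_{i,j}$ are the four interior triangles of each copy $\sigma(a,b,i,j)$, and these four triangles form a "fan" closing up only within a single copy, $\K'$ must contain all four interior triangles of exactly one copy $\sigma(a_{i,j},b_{i,j},i,j)$ and no interior triangle of any other copy in tile $(i,j)$. (If $c_{i,j}\notin\K'$ at all, then $\K'$ contains no interior triangle of tile $(i,j)$; I will need to argue separately that this cannot happen — e.g. because then $\K'$ would be disconnected or fail to be a closed surface once we track which triangles remain; this is one of the delicate points.)

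Next I would propagate this along shared $2$-edge paths. Having fixed, for each tile, the chosen copy $\sigma(a_{i,j},b_{i,j},i,j)$ whose four interior triangles lie in $\K'$, consider the $2$-edge path $\rsquare(\sigma(a_{i,j},b_{i,j},i,j))$, which has been identified with $\lsquare(\sigma(a,b',i,j+1))$ for all $(a,b')\in S_{i,j+1}$ with $a=a_{i,j}$ — but crucially only those copies with first coordinate $a_{i,j}$ got glued to this particular path (the identification was done "for each $\ldots$ and $a\in[n]$" separately). Each interior edge on that path must, in a closed surface, be incident to exactly two triangles of $\K'$; one of them comes from the already-chosen left square, so exactly one triangle from the right side is present. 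Using again that no triangle of any $\sigma$ contains two boundary edges, and that the "boundary-incident" triangles of a copy are determined, I would conclude that the chosen copy in tile $(i,j+1)$ must be glued to this path, i.e. $a_{i,j+1}=a_{i,j}$. Symmetrically, using $\bsquare/\tsquare$ identifications gives $b_{i,j}=b_{i+1,j}$. Since $(a_{i,j},b_{i,j})\in S_{i,j}$ by construction, the selected pairs form exactly a \gridtiling solution.

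Finally I would handle the boundary tiles and the back sheet to rule out degenerate alternatives: the edges of $\ell_i, r_i, t_j, b_j$ on the back sheet are each incident to two triangles of the back sheet and the glued front copies, and one shows that if $\K'$ contains the relevant interior triangles it must contain the whole back sheet (again by the closed-surface / manifold-link condition, since the back sheet is connected and its interior vertices have their full links there), and that $\K'$ cannot avoid the front entirely (it would then be at most the back sheet, which has boundary, contradicting $\K'\cong\SS^2$). The main obstacle I anticipate is the bookkeeping around the center vertices and the "one copy per tile" argument: making rigorous that the link of $c_{i,j}$ in $\K'$ being a single circle forces all four interior triangles of a single copy and excludes mixing triangles from different copies — this needs a careful look at how the four interior triangles of $\sigma$ attach to $\csquare(\sigma)$ and the fact that the four copies share only boundary vertices/edges, not interior structure near the center. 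Once that local rigidity is established, the global propagation along rows and columns is essentially forced and the rest is straightforward.
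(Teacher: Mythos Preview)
Your proposal is correct and follows essentially the same approach as the paper: use the identified center vertex $c_{i,j}$ (and the fact that no triangle of $\sigma$ contains $\csquare(\sigma)$ together with a boundary edge) to get ``at most one square per tile,'' use that interior edges of $\sigma$ are unshared to get all-or-nothing for each square, use the back sheet having boundary to force at least one front square into $\K'$, and then propagate along the identified $2$-edge paths to get both ``at least one square per tile'' and the row/column consistency. One small caveat: the triangulated square $\sigma$ in the construction has $16$ triangles (not just four around the center), so your phrase ``the four interior triangles of each copy'' is likely numerically off---but your argument only needs that the triangles incident to $\csquare(\sigma)$ are interior to $\sigma$, which is exactly the property the paper singles out, so the reasoning goes through unchanged.
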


\begin{proof}
  We show this by first demonstrating that, for any pair $i,j\in[k]$, the subcomplex $\K'$ cannot contain two distinct squares $\sigma(a,b,i,j)$ and $\sigma(c,d,i,j)$.
  We then show that for any pair $i,j\in[k]$, at least one of the squares $\sigma(a,b,i,j)$ must be part of $\K'$.
  Lastly we combine these two facts to construct a solution for the \gridtiling instance $(n,k,\S)$.

  We begin by noting that $\K'$ cannot be empty. Moreover, note that, if for any
  values of $a,b,i,j$ the subcomplex $\K'$ contains one triangle from $\sigma(a,b,i,j)$, then
  $\K'$ must contain all triangles from $\sigma(a,b,i,j)$. This follows from
  the fact that the 2-dimensional sphere has no boundary and the interior edges of $\sigma(a,b,i,j)$
  are not shared by any other triangles. In the rest of this
  argument, we need only consider whether $\K'$ does or does not contain all of
  $\sigma(a,b,i,j)$ for any values of $a,b,i,j$.

  Now assume that, for some pair $i,j\in [k]$,  the subcomplex $\K'$ contains two distinct squares $\sigma(a,b,i,j)$ and $\sigma(c,d,i,j)$ 
  and consider the neighborhood of the point $\csquare(\sigma(a,b,i,j))$.
  Since $\csquare(\sigma(a,b,i,j))$ was identified with $\csquare(\sigma(c,d,i,j))$, 
  we see that this point has no neighborhood homeomorphic to a plane, and so clearly $\K'$ cannot contain both of these distinct squares.

  We now show that, for any pair $i,j\in[k]$, $\K'$ must contain some square of the form $\sigma(a,b,i,j)$.
  If $\K'$ contains no squares $\sigma(\cdot)$ at all then it can only contain either
  part of, or the whole of, the back sheet but either way $\K'$ cannot be a
  2-sphere. Thus we know that $\K'$ must contain $\sigma(a,b,i,j)$
  for at least one set of values $a,b,i,j$. Given this, assume we have a pair
  $i',j'\in[n]$ such that $\sigma(a',b',i',j')$ is not in
  $\K'$ for each pair $(a',b')\in S_{i',j'}$. Let $\sigma(a,b,i,j)$ be a square in $\K'$ for
  some pair $a,b$, and without loss of generality, assume that $i' = i - 1$ and
  $j' = j$.  This is equivalent to choosing two adjacent cells where one
  contains a square in $\K'$ and the other does not contain any square in
  $\K'$, and can always be achieved by appropriate selection of values (and
  possibly rotating or flipping the whole construction).

  Consider an edge of the 2-edge path $\tsquare(\sigma(a,b,i,j))$ in $\K'$. Since $\K'$ is a
  2-dimensional sphere, this edge cannot be a boundary and thus must separate
  two distinct triangles. One of these triangles is present in $\sigma(a,b,i,j)$.
  By our construction, the other triangle is either in $\sigma(a',b,i-1,j)$
  (where $\tsquare(\sigma(a,b,i,j))$ is identified with
  $\bsquare(\sigma(a',b,i-1,j))$) or in $\sigma(c,b,i,j)$ with $a \neq c$
  (where $\tsquare(\sigma(a,b,i,j))$ is identified with
  $\tsquare(\sigma(c,b,i,j))$).  This means that one of $\sigma(c,b,i,j)$ or
  $\sigma(a',b,i-1,j)$ must be in $\K'$.
  By our earlier argument, $\sigma(a,b,i,j)$ and $\sigma(c,b,i,j)$ cannot both be in $\K'$.
  This means that $\sigma(a',b,i-1,j)$ must be in $\K'$, and thus our assumption
  must be false. Therefore for each pair $i,j\in[n]$, at least one square
  $\sigma(a,b,i,j)$ must be in the subcomplex $\K'$.

  Combining these results we see that if the subcomplex $\K'$ is a 2-sphere, then $\K'$
  contains exactly one square $\sigma(a_{i,j},b_{i,j},i,j)$ for each pair $i,j\in[n]$.
  Since for each values $i,j,a,b$ we have
  that $\sigma(a_{i,j},b_{i,j},i,j)\in \K$ if and only if $(a_{i,j},b_{i,j})\in S_{i,j}$,
  we obtain that $(a_{i,j},b_{i,j}) \in S_{i,j}$ for each $i,j\in [k]$.
  As $\tsquare(\sigma(a,b,i,j))$
  is identified with $\bsquare(\sigma(a',b,i-1,j))$, by induction we see that, for
  each $j\in[k]$, we have $b_{1,j}=b_{2,j}=\cdots =b_{k,j}$.
  A similar argument shows that for each $i\in[k]$ we have $a_{i,1}=a_{i,2}=\cdots=a_{i,k}$. 
  We deduce that $(a_{i,j},b_{i,j})$, for each pair $i,j\in[k]$,
  is a solution for $(n,k,\S)$.
\end{proof}

\Cref{lemma:gridtiling-to-2sphere} and~\cref{lemma:2sphere-to-gridtiling} establish correctness of our reduction from \gridtiling to \twosphere. Clearly, the reduction can be performed in polynomial time, and the parameter value $k'$ of a created instance is $\Oh(k^2)$. Thus,
\cref{theorem:main:lowerbound} now follows directly from \cref{theorem:marx:gridtilinglowerbound}.

\section{A tight algorithm for 2-dim-sphere}\label{section:main:algorithm}

For each simplicial complex $\K$,
let $\Sd(\K)$ be its barycentric subdivision.
Its construction for the 2-dimensional case is as follows (see also \cref{fig:Sd}). 
Each vertex, edge and triangle of $\K$ is a vertex of $\Sd(\K)$.
To emphasize the difference, for a simplex $\tau$ of $\K$ we use
$v_\tau$ for the corresponding vertex in $\Sd(\K)$.
There is an edge $v_\tau v_{\tau'}$ in $\Sd(\K)$ between 
any two simplices $\tau$ and $\tau'$ of $\K$ 
precisely when one is contained in the other.
There is a triangle $v_{\tau_1}v_{\tau_2}v_{\tau_3}$ in $\Sd(\K)$
whenever there is a chain of inclusions $\tau_1\subsetneq \tau_2\subsetneq \tau_3$.
It is well-known, and not difficult to see, that $\Sd(\K)$ and $\K$ are homeomorphic.
See for example~\cite[Chapter 1]{Matousek07} or~\cite[Chapter 2]{Munkres93a}.
Let $\Sd_1(\K)$ be the 1-skeleton of $\Sd(\K)$, which is a graph.

\begin{figure}
\centering
	\includegraphics[page=5,width=\textwidth]{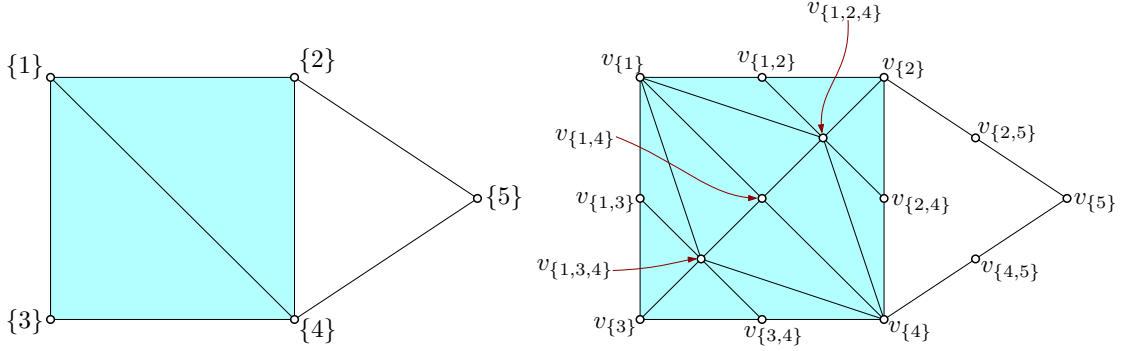}\label{fig:Sd}
	\caption{A simplicial complex $\K$ (left side) and its barycentric subdivision $\Sd(\K)$ 
		(right side).}
\end{figure}

An \DEF{isomorphism} between two simplicial complexes $\K_1$ and $\K_2$
is a bijective map 
$f\colon V(\K_1) \rightarrow V(\K_2)$ with the property that,
for all $\{v_1,\dots,v_k\}\subseteq V(\K_1)$,
the simplex $\{v_1,v_2,\dots ,v_k\}$ is in $\K_1$ precisely when 
$\{ f(v_1),f(v_2),\dots ,f(v_k)\}$ is a simplex of $\K_2$.
Two simplicial complexes are \DEF{isomorphic} if and only if
there exists some isomorphism between them.
When two simplicial complexes are isomorphic, 
they are also homeomorphic. (We can make geometric realizations
for both simplicial complexes with the same carrier.)
Note that isomorphism of simplicial complexes of dimension 1
matches the definition of isomorphism of graphs.

Testing isomorphism of simplicial complexes can be 
reduced to testing isomorphism of \emph{colored} graphs, as follows.
Let $G$ and $H$ be graphs and assume that we have 
colorings 
$c_G\colon V(G)\rightarrow \NN$ and 
$c_H\colon V(H)\rightarrow \NN$. 
(The term coloring here refers to a labeling; there is no relation
to the standard graph colorings.)
A \DEF{color-preserving isomorphism} between $(G,c_G)$ and $(H,c_H)$ is
a graph isomorphism $f\colon V(G) \rightarrow V(H)$ such that,
for each vertex $v\in V(G)$, it holds $c_H(f(v))=c_G(v)$.
Thus, the isomorphism preserves the color of each vertex.
We say that $(G,c_G)$ and $(H,c_H)$ are \DEF{color-preserving
isomorphic} if there is some color-preserving isomorphism between them. 
We will use the dimension $\dim$ of the simplex
as the coloring for the graph $\Sd_1(\cdot)$. Thus
$\dim(v_\tau)=|\tau|-1$ for each simplex $\tau$ of the simplicial complex.

\begin{lemma}
\label{lemma:isomorphism}
	Two simplicial complexes $\K_1$ and $\K_2$ are isomorphic
	if and only if $(\Sd_1(\K_1),\dim)$ and $(\Sd_1(\K_2),\dim)$ 
	are color-preserving isomorphic.
\end{lemma}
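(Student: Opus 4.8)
The plan is to prove both directions by a direct, structural argument that tracks how the barycentric subdivision is built from the simplices of the original complex. The key observation is that, by construction, the vertex set $V(\Sd_1(\K_i))$ is in canonical bijection with the set of non-empty simplices of $\K_i$, and under this bijection the $\dim$-coloring records exactly $|\tau|-1$. Moreover, an edge $v_\tau v_{\tau'}$ is present precisely when $\tau \subsetneq \tau'$ or $\tau' \subsetneq \tau$. So a color-preserving isomorphism of the subdivided $1$-skeletons is, semantically, a bijection between the simplices of $\K_1$ and those of $\K_2$ that preserves dimension and preserves the containment relation among simplices — and this is essentially the data of a simplicial isomorphism, once we check it is induced by a vertex bijection.

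For the forward direction ($\K_1 \cong \K_2 \Rightarrow$ colored graphs isomorphic), I would start from a simplicial isomorphism $f\colon V(\K_1)\to V(\K_2)$. It extends to a bijection $\hat f$ on non-empty simplices by $\hat f(\{v_1,\dots,v_m\}) = \{f(v_1),\dots,f(v_m)\}$, which is well-defined precisely because $f$ is a simplicial isomorphism. This $\hat f$ preserves cardinality (hence $\dim$) and preserves strict inclusion in both directions, so the induced map $v_\tau \mapsto v_{\hat f(\tau)}$ on $V(\Sd_1(\K_1))$ is a color-preserving graph isomorphism onto $V(\Sd_1(\K_2))$ — edges correspond exactly to inclusion pairs on both sides.

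For the reverse direction, which I expect to be the main obstacle, I start with a color-preserving isomorphism $g\colon V(\Sd_1(\K_1))\to V(\Sd_1(\K_2))$ and must recover a vertex bijection of the original complexes. The point is that $g$ maps color-$0$ vertices to color-$0$ vertices, i.e. it restricts to a bijection $f\colon V(\K_1)\to V(\K_2)$ between the vertices of the original complexes (the singletons). I then need to argue that $g$ agrees with the map $\hat f$ induced by $f$ on \emph{all} of $V(\Sd_1(\K_1))$, and that $f$ is a simplicial isomorphism. The crucial combinatorial fact is that a simplex $\tau$ of $\K_i$ is recoverable from $v_\tau$ inside the colored graph: $\tau$ is exactly the set of color-$0$ neighbors of $v_\tau$ (the vertices of $\tau$), and moreover there are exactly $|\tau|$ of them, matching the color of $v_\tau$. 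Since $g$ preserves adjacency and colors, the color-$0$ neighbors of $v_\tau$ are mapped bijectively by $g$ (equivalently by $f$) to the color-$0$ neighbors of $g(v_\tau)$; writing $g(v_\tau)=v_{\tau'}$, this forces $\tau' = f(\tau)$. In particular, $f(\tau)$ is a simplex of $\K_2$ for every simplex $\tau$ of $\K_1$, and applying the same argument to $g^{-1}$ gives the converse, so $f$ is a simplicial isomorphism; and since $g(v_\tau)=v_{f(\tau)}=\hat f(v_\tau)$, the isomorphism $g$ was induced by $f$ all along. The only care needed is the base case of this neighbor-recovery argument — that the color-$0$ vertices adjacent to $v_\tau$ are precisely the vertices of $\tau$ and nothing else — which follows directly from the definition of the edge set of $\Sd(\K)$ (a color-$0$ vertex $v_{\{v\}}$ is adjacent to $v_\tau$ iff $\{v\}\subseteq\tau$).
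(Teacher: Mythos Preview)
Your proposal is correct and follows essentially the same approach as the paper: the forward direction extends the vertex bijection $f$ to a map $v_\tau\mapsto v_{f(\tau)}$ on all simplices, and the reverse direction restricts the color-preserving graph isomorphism to color-$0$ vertices and then uses that the color-$0$ neighbors of $v_\tau$ are exactly (the vertices indexing) the elements of $\tau$, so that adjacency- and color-preservation force $g(v_\tau)=v_{f(\tau)}$ and hence $f(\tau)\in\K_2$. Your write-up additionally observes that $g$ is in fact the map induced by $f$, which the paper does not state explicitly but which falls out of the same argument.
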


\begin{proof}
	Assume that $\K_1$ and $\K_2$ are isomorphic and let 
	$f\colon V(\K_1) \rightarrow V(\K_2)$ be an isomorphism.
	Then $f'\colon V(\Sd_1(\K_1)) \rightarrow V(\Sd_1(\K_2))$, defined
	by $f'(v_\tau)=v_{f(\tau)}$ for all $\tau \in \K_1$,
	is a color-preserving isomorphism
	between $(\Sd_1(\K_1),\dim)$ and $(\Sd_1(\K_2),\dim)$.
	
	Assume now that $f\colon V(\Sd_1(\K_1)) \rightarrow V(\Sd_1(\K_2))$
	is a color-preserving isomorphism
	between $(\Sd_1(\K_1),\dim)$ and $(\Sd_1(\K_2),\dim)$.
	The restriction of $f$ to the vertices of dimension $0$ of $\Sd_1(\K_1)$
	defines a map $f_0\colon V(\K_1) \rightarrow V(\K_2)$, where $f_0(a)$ is 
	the vertex with $f(v_a)=v_{f_0(a)}$. The map $f_0$ is a bijection
	because the restriction of $f$ to the vertices of dimension $0$ is a bijection.
	Next we argue that $f_0$ is an isomorphism.
	
	Consider a subset $\tau=\{ a_1,\dots,a_k\}$ of vertices in $\K_1$.
	If $\tau$ is a simplex of $\K_1$, then $f(v_\tau)=v_{\tau'}$ for
	some $\tau'\in \K_2$ . 
	Note that $|\tau'|=|\tau|=k$ because $f$ is a color-preserving isomorphism.
	For each $a_i\in \tau$, the edge $v_{a_i}v_\tau$ is in $\Sd_1(\K_1)$,
	and therefore $f(v_{a_i})f(v_\tau)=f(v_{a_i})v_{\tau'}$ is an edge of $\Sd_1(\K_2)$.
	It follows that $f(v_{a_1}),\dots, f(v_{a_k})$ are adjacent to $v_{\tau'}$.	
	Since by construction of $\Sd_1(\K_2)$ the vertex $v_{\tau'}$ 
	is adjacent to precisely $k$ vertices of $\Sd_1(\K_2)$ with dimension $0$, 
	those $k$ vertices are $f(v_{a_1}),\dots, f(v_{a_k})$.
	Since $f(v_{a_1})=v_{f_0(a_1)},\dots, f(v_{a_k})=v_{f_0(a_k)}$ because of
	the definition of $f_0$, 
	then $f_0(a_1),\dots, f_0(a_k)$ is a simplex of $\K_2$.	

	From the previous paragraph we conclude that, 
	if $\tau$ is a simplex of $\K_1$, then $f_0(\tau)$ is a simplex of $\K_2$.
	A symmetric argument shows that, if $f_0(\tau)$ is a simplex of $\K_2$,
	then $\tau$ is a simplex of $\K_1$. It follows that $f_0$ is an isomorphism between
	$\K_1$ and $\K_2$.
\end{proof}

\begin{lemma}
\label{lemma:subisomorphism}
	Let $\K_1$ and $\K_2$ be simplicial complexes.
	The simplicial complex $\K_1$ has a subcomplex isomorphic to $\K_2$
	if and only if $\Sd_1(\K_1)$ has a subgraph $G$ such that
	$(G,\dim)$ and $(\Sd_1(\K_2),\dim)$ are color-preserving isomorphic.
\end{lemma}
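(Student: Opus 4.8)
The plan is to leverage \cref{lemma:isomorphism}: the only difference from that lemma is that here one must handle an \emph{arbitrary} subgraph $G$ of $\Sd_1(\K_1)$, whereas \cref{lemma:isomorphism} concerns the subdivision of a whole complex. The bridge is to show that any color-preserving copy of $\Sd_1(\K_2)$ sitting inside $\Sd_1(\K_1)$ is necessarily of the form $\Sd_1(\K')$ for a subcomplex $\K'$ of $\K_1$; once this is established, both directions reduce to \cref{lemma:isomorphism}.

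For the forward direction, suppose $\K'\subseteq\K_1$ is a subcomplex isomorphic to $\K_2$. Then $\Sd_1(\K')$ is a subgraph of $\Sd_1(\K_1)$: its vertices are the barycenters $v_\tau$ of simplices $\tau\in\K'\subseteq\K_1$, every edge of $\Sd_1(\K')$ joins $v_\tau,v_{\tau'}$ with $\tau\subsetneq\tau'$ in $\K'$ and hence in $\K_1$, and the dimension of a simplex is the same whether read in $\K'$ or in $\K_1$, so the coloring agrees. Applying \cref{lemma:isomorphism} to $\K'$ and $\K_2$ shows $(\Sd_1(\K'),\dim)$ and $(\Sd_1(\K_2),\dim)$ are color-preserving isomorphic, so $G:=\Sd_1(\K')$ witnesses the claim.

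The substance is the backward direction. Given a color-preserving isomorphism $f$ from a subgraph $G\subseteq\Sd_1(\K_1)$ onto $(\Sd_1(\K_2),\dim)$, put $\K':=\{\tau\in\K_1\mid v_\tau\in V(G)\}$. The key step is a local statement: if $v_\tau\in V(G)$ with $|\tau|=d+1$, then $V(G)$ contains $v_{\tau'}$ and $E(G)$ contains the edge $v_\tau v_{\tau'}$ for every nonempty proper face $\tau'\subsetneq\tau$. Indeed $f(v_\tau)=v_\rho$ for some $\rho\in\K_2$ with $|\rho|=d+1$ by color preservation; since $\K_2$ is a simplicial complex, the neighbors of $v_\rho$ in $\Sd_1(\K_2)$ of dimension $<d$ are exactly the barycenters of the $2^{d+1}-2$ nonempty proper faces of $\rho$. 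Hence $v_\tau$ has exactly $2^{d+1}-2$ neighbors of dimension $<d$ in $G$, while inside $\Sd_1(\K_1)$ the only candidates are the $2^{d+1}-2$ barycenters of nonempty proper faces of $\tau$ (all present because $\K_1$ is a complex); so these must all lie in $G$, together with the connecting edges. Two consequences follow at once: $\K'$ is closed under taking nonempty faces, i.e.\ is a subcomplex of $\K_1$; and $G=\Sd_1(\K')$, since every edge of $\Sd_1(\K')$ is a ``downward'' edge at its larger endpoint and hence lies in $G$ by the local statement, while conversely $G\subseteq\Sd_1(\K_1)$ uses only vertices of $\K'$. Then $(G,\dim)=(\Sd_1(\K'),\dim)$ is color-preserving isomorphic to $(\Sd_1(\K_2),\dim)$, and \cref{lemma:isomorphism} gives $\K'\cong\K_2$, so $\K'$ is the required subcomplex of $\K_1$.

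The main obstacle is exactly this local step: a subgraph of $\Sd_1(\K_1)$ need not be induced nor downward closed in the face order, so one cannot simply assume the lower-dimensional barycenters and their edges are present. The argument extracts this from an exact neighborhood count that is pinned down once color preservation is combined with the fact that both $\K_1$ and $\K_2$ are closed under faces; no comparable rigidity would be available for a general colored subgraph, which is why matching a full barycentric subdivision is essential to the statement.
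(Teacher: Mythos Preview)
Your proof is correct and follows essentially the same approach as the paper: both directions are reduced to \cref{lemma:isomorphism} by showing that any color-preserving copy of $\Sd_1(\K_2)$ inside $\Sd_1(\K_1)$ is necessarily of the form $\Sd_1(\K')$ for a subcomplex $\K'\subseteq\K_1$, via the same neighborhood-counting argument. The only presentational difference is that the paper applies the counting step at vertices that are maximal in $V(G)$ under inclusion, whereas you apply it at every vertex; your version makes the verification of $G=\Sd_1(\K')$ (in particular, that all \emph{edges} of $\Sd_1(\K')$ are present in $G$) slightly more transparent.
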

\begin{proof}
	Note that for each subcomplex $\K'_1$ of $\K_1$ we have
	that $\Sd_1(\K'_1)$ is exactly the subgraph of $\Sd_1(\K_1)$ induced
	be the vertices $v_{\tau}$, for $\tau\in \K'_1\setminus \{ \emptyset \}$.
	Therefore, if $\K_1$ has a subcomplex $\K'_1$ isomorphic to $\K_2$,
	then the graph $G=\Sd_1(\K'_1)$ is a subgraph of $\Sd_1(\K_1)$ and,
	by \cref{lemma:isomorphism}, $(G,\dim)$ and $(\Sd_1(\K_2),\dim)$ 
	are color-preserving isomorphic.
	
	Assume, for the other direction, 
	that $(G,\dim)$ and $(\Sd_1(\K_2),\dim)$ are color-preserving isomorphic
	for some subgraph $G$ of $\Sd(\K_1)$. 
	Let $f$ be such a color-preserving isomorphism.
	First we show that $G$ is $\Sd_1(\K'_1)$ for some subcomplex $\K'_1$ of $\K_1$.
	Indeed, consider any vertex $v_\tau$ of $G$ such that for no superset $\tilde\tau$ of $\tau$ 
	we have $v_{\tilde\tau}$ in $G$.
	The vertex $f(v_\tau)$ is a vertex $v_{\tau_2}$ for $\tau_2\in \K_2$ and
	moreover $|\tau|=|\tau_2|$ as $f$ preserves color and therefore dimension.
	Each subset $\tau'_2$ of $\tau_2$ has some vertex $v_{\tau'_2}$
	in $\Sd_1(\K_2)$. For each such $\tau'_2\subset \tau_2$ we have some
	distinct vertex $v_{\tau'_1}$ in $G$ such that $f(v_{\tau'_1})=v_{\tau'_2}$
	and $v_{\tau'_1}$ must be adjacent to $v_\tau$.
	Since $\tau$ and $\tau_2$ have the same cardinality,
	they have the same number of subsets, and thus $\tau'_1$ iterates over 
	all subsets of $\tau$, when $\tau'_2$ iterates over the subsets of $\tau_2$. 
	This means that $v_{\tau'}$ is in $G$ for all subsets $\tau'\subset\tau$.
	Therefore, if we take $\K'_1=\{\tau\in \K_1\mid v_\tau \in V(G) \}\cup \{ \emptyset \}$,
	then $\K'_1$ is a simplicial complex and $G=\Sd_1(\K'_1)$. 
	From \cref{lemma:isomorphism} it follows that $\K'_1$ and $\Sd_1(\K_2)$ are isomorphic.
\end{proof}

\begin{lemma}
\label{lemma:color-coding}
	Let $\K$ be a simplicial complex with $n$ simplices
	and let $\K'$ be a simplicial complex with $k$ simplices.
	Let $t$ be the treewidth of $\Sd_1(\K')$.
	In time $2^{\Oh(k)} n^{\Oh(t)}$ we can decide whether $\K$ contains
	a subcomplex isomorphic to $\K'$.
\end{lemma}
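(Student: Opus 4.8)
\textbf{Proof plan for \cref{lemma:color-coding}.}
The plan is to reduce the problem, via \cref{lemma:subisomorphism}, to finding a color-preserving subgraph isomorphic to $(\Sd_1(\K'),\dim)$ inside the colored graph $(\Sd_1(\K),\dim)$, and then to solve that colored subgraph-isomorphism problem by the standard color-coding technique of Alon, Yuster and Zwick combined with dynamic programming over a tree decomposition of the pattern. Let $H=\Sd_1(\K')$ be the pattern graph; it has $\Oh(k)$ vertices (one per nonempty simplex of $\K'$, of which there are $\Oh(k)$ in dimension $2$), and it has treewidth $t$. Let $G=\Sd_1(\K)$ be the host graph, with $\Oh(n)$ vertices. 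By \cref{lemma:subisomorphism}, $\K$ has a subcomplex isomorphic to $\K'$ if and only if $G$ contains a subgraph $G'$ with $(G',\dim)$ color-preserving isomorphic to $(H,\dim)$, i.e.\ there is an injective homomorphism-with-the-right-image, equivalently a \emph{subgraph embedding} of $H$ into $G$ respecting the $\dim$-coloring. Note the coloring already partitions vertices into the three dimension classes $0,1,2$; this is used but is not by itself enough to make the search tractable.

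The main steps: \textbf{(1)} Pick a random coloring $\chi\colon V(G)\to [|V(H)|]$ of the host graph (a ``colorful'' color-coding palette of $\Oh(k)$ colors, not to be confused with the $\dim$-coloring). If $G$ contains the desired copy of $H$, then with probability at least $|V(H)|^{-|V(H)|}=2^{-\Oh(k\log k)}$ the $\Oh(k)$ vertices of that copy receive pairwise distinct $\chi$-colors; we then look only for a copy that is \emph{colorful} under $\chi$ \emph{and} respects $\dim$. \textbf{(2)} Fix a nice tree decomposition $(T,\{X_a\})$ of $H$ of width $t$, with $\Oh(k)$ bags. Do bottom-up dynamic programming: for each bag $X_a$, each injective map $\phi\colon X_a\to V(G)$ that respects $\dim$ (i.e.\ $\dim(\phi(v))=\dim(v)$) and the edges of $H[X_a]$, and each subset $C\subseteq[|V(H)|]$ of palette colors, store a Boolean recording whether the part of $H$ introduced so far below $a$ can be embedded into $G$, respecting $\dim$ and all relevant edges, using exactly the palette colors $C$ on the already-placed vertices, and agreeing with $\phi$ on $X_a$. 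The number of table entries per bag is $n^{\Oh(t)}\cdot 2^{\Oh(k)}$, and the join/introduce/forget transitions are handled in the usual way; the key point is that disjointness of the used palette-color sets at a join node guarantees that the two partial embeddings use disjoint vertex sets, so the combined map is injective. The root table tells us whether a colorful $\dim$-respecting copy exists. \textbf{(3)} Amplify: repeat with $2^{\Oh(k\log k)}=2^{\Oh(k)}\cdot k^{\Oh(k)}$ independent colorings to reach constant error probability, or derandomize using a perfect hash family of size $2^{\Oh(k)}\log n$ (Naor--Schulman--Srinivasan / splitters); either way the overhead is $2^{\Oh(k)}$ (absorbing $k^{\Oh(k)}=2^{\Oh(k\log k)}$ into $2^{\Oh(k)}$ is \emph{not} quite right, so one should be slightly careful — but $k^{\Oh(k)}n^{\Oh(t)}=2^{\Oh(k\log k)}n^{\Oh(t)}$, and since $2^{\Oh(k\log k)}\le 2^{\Oh(k)}n^{\Oh(1)}$ only when $k=\Oh(\log n)$; for the clean bound one states the running time as $2^{\Oh(k)}n^{\Oh(t)}$ by noting $k^{k}\le n^{k}$ when $k\le n$ and folding the $k^{\Oh(k)}$ factor into $n^{\Oh(t)}$ after observing $t\le k$, or more simply one re-derives the $2^{\Oh(k)}$ bound from a size-$2^{\Oh(k)}\log n$ hash family). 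I would present it via the hash-family derandomization to get the bound exactly as stated.

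The main obstacle is step \textbf{(2)}: setting up the dynamic-programming table so that it simultaneously enforces (a) the $\dim$-coloring constraint, (b) edge constraints of $H$ both inside bags and across forget nodes, and (c) \emph{injectivity} of the global embedding — the last being the reason color-coding is needed at all, since a tree decomposition of $H$ does not by itself prevent two far-apart pattern vertices from being mapped to the same host vertex. The palette-color bookkeeping ($2^{\Oh(k)}$ factor) resolves (c), and one must verify the standard invariant that at a join node the two child subsets of used colors are disjoint exactly when the corresponding partial embeddings have disjoint images (given they agree on the shared bag). The remaining work — counting table sizes, writing the introduce/forget/join recurrences, and checking the final success probability / hash-family size — is routine, mirroring the classical color-coding subgraph-isomorphism algorithm, so I would keep it brief and cite the standard references.
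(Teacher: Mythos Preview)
Your approach is correct and matches the paper's: reduce via \cref{lemma:subisomorphism} to color-preserving subgraph isomorphism in $(\Sd_1(\K),\dim)$, then solve that by Alon--Yuster--Zwick color-coding with dynamic programming over a tree decomposition of the pattern. The paper is terser, citing \cite[Theorem~6.3]{AlonYZ95} as a black box that already delivers the $2^{\Oh(|V(H)|)}|V(G)|^{\Oh(t_H)}$ bound (with the hash-family derandomization built in) and merely observing that the extra $\dim$-classes can be folded into the coloring; your step~(3) detour through the randomized $2^{\Oh(k\log k)}$ bound and the faulty attempt to absorb $k^{\Oh(k)}$ into $n^{\Oh(t)}$ via ``$t\le k$'' (which points the wrong way) is unnecessary once you invoke that theorem directly.
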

\begin{proof}
	Alon, Yuster and Zwick~\cite[Theorem 6.3]{AlonYZ95} show how to find
	in a graph $G$ a subgraph isomorphic to a given graph $H$ in
	time $2^{\Oh(|V(H)|)} |V(G)|^{\Oh(t_H)}$, where $t_H$ is the treewidth of $H$.
	The technique is color-coding, which has become a standard tool for
	developing fixed-parameter algorithms;
	see for example~\cite[Section 5.2]{CyganFKLMPPS15}. 
	In this technique, one tries several different
	colorings of the vertices of $G$ with $|V(H)|$ colorings, and then
	uses dynamic programming to search for a copy of $H$ in $G$ where
	all the colors of the vertices are distinct. Thus, if the vertices of $H$
	and $G$ are already classified into some classes, then this can only
	help the algorithm. The class of a vertex can be considered as
	part of the coloring. This means that the algorithm can be trivially
	adapted to the problem of subgraph color-preserving isomorphism:
	given two pairs $(G,c_G)$ and $(H,c_H)$, where $c_G$ and $c_H$ are colorings
	of the vertices, is there a subgraph $G'$ of $G$ such that
	$(G',c_{G'})$ and $(H,c_H)$ are color-preserving isomorphic, where
	$c_{G'}$ is the restriction of $c_G$ to $G'$.
	
	Because of \cref{lemma:subisomorphism}, 
	deciding whether $\K$ contains a subcomplex isomorphic to $\K'$
	is equivalent to deciding whether
	$\Sd_1(\K)$ contains a subgraph $G$ such that
	$(G,\dim)$ and $(\Sd_1(\K'),\dim)$ are color-preserving isomorphic.
	Apply the color-coding algorithm of Alon et al.~as, discussed before,
	we spend $2^{\Oh(|V(\Sd_1(\K'))|)} |V(\Sd_1(\K))|^{\Oh(t)} = 2^{\Oh(k)} n^{\Oh(t)}$ time.
\end{proof}

\begin{proof}[Proof of \cref{theorem:main:upperbound}]
	There are $2^{\Oh(k)}$ different (unlabeled) triangulations of the 2-sphere
	with at most $k$ triangles; see for example~\cite{Tutte,BonichonGHPS06}, using that $k$ triangles entail having at most $\Oh(k)$ vertices. 
	For each such triangulation, let $\K_i$ be the
	corresponding simplicial complex. Then $\Sd_1(\K_i)$ is a planar graph
	with $\Oh(k)$ vertices and thus has treewidth $\Oh(\sqrt{k})$.
	Using \cref{lemma:color-coding} we can decide in $2^{\Oh(k)}n^{\Oh(\sqrt{k})}$ time
	whether $\K$ has a subcomplex isomorphic to $\K_i$.
	Iterating over all the $2^{\Oh(k)}$ triangulations we spend
	in total 
	$2^{\Oh(k)}\cdot 2^{\Oh(k)} n^{\Oh(\sqrt{k})}$ time and the result follows.
\end{proof}

\section{Kernelization and compression for Deletion-to-2-dim-sphere}\label{section:preprocessing}

In this section, we prove that \deletiontotwosphere admits a polynomial kernelization that returns instances with $\Oh(k^2)$ triangles and has bit-size $\Oh(k^2\log k)$, respectively a polynomial compression to a weighted version with bit-size $\Oh(k\log k)$. We first give a few simple reduction rules and then show how to reduce (and possibly encode) the resulting instances. The rules are to be applied in order, i.e., preference is given to earlier rules. Recall that input instances $(\K,k)$ consist of a $2$-dimensional simplicial complex \K and an integer $k$, and ask whether deletion of at most $k$ triangles from \K yields a subcomplex that is homeomorphic to the $2$-dimensional sphere $\SS_2$.

In what follows, we will delete subcomplexes from an instance of our problem and at the same time reduce the value of $k$. If at any point in time $k$ becomes negative we know that our original instance was a no-instance, so we will assume that $k$ is always non-negative.
Additionally we point out that whenever deleting a subcomplex from our simplicial complex, any vertices or edges which would no longer be contained in any triangle are also deleted.

\begin{rrule}\label{rule:degreeone}
If any triangle $T\in\K$ has an edge that is not an edge of any other triangle in $\K$ then delete $T$ from $\K$ and reduce $k$ by one.
\end{rrule}

Clearly, such a triangle $T$ cannot be contained in a subcomplex $\K'\subseteq\K$ that is homeomorphic to the $2$-sphere, and hence it must be among the $k$ deleted triangles in any solution (if one exists).
Note that when \cref{rule:degreeone} does not apply, each edge in $\K$ is shared by at least two triangles of \K. On the other hand, in the desired subcomplex that is homeomorphic with the $2$-sphere each edge is shared by \emph{exactly} two triangles. Denote by $\T\subseteq\K$ the set of triangles that share at least one of their edges with more than one other triangle. There is a simple upper bound for the size of $\T$ if $(\K,k)$ is a yes-instance.

\begin{proposition}
If $(\K,k)$ is a yes-instance of \deletiontotwosphere then $|\T|\leq 7k$.
\end{proposition}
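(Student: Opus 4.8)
The plan is to bound $|\T|$ by relating the triangles of $\T$ to the triangles that get deleted in a solution. Let $(\K,k)$ be a yes-instance, and fix a solution: a set $D$ of at most $k$ triangles whose deletion from $\K$ leaves a subcomplex $\K'$ homeomorphic to $\SS^2$. Since \cref{rule:degreeone} does not apply, every edge of $\K$ is shared by at least two triangles. In $\K'$, however, every edge is shared by exactly two triangles. So for every edge $e$ that is shared by $m_e\ge 3$ triangles in $\K$, at least $m_e-2\ge 1$ of those triangles lie in $D$; in particular every edge that is ``bad'' (incident with $\ge 3$ triangles) loses at least one incident triangle to $D$.

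First I would charge each triangle $T\in\T$ to a bad edge of $T$ (one of its edges incident with $\ge 3$ triangles, which exists by definition of $\T$) and then charge that bad edge to a deleted triangle incident with it. The key counting step is: how many triangles of $\T$ can be charged to a single deleted triangle $T'\in D$? A triangle $T'$ has exactly three edges, and each edge of $T'$ that is bad can ``absorb'' the charge from all triangles of $\T$ assigned to it — but a priori that is unboundedly many. To fix this, I would instead charge more carefully: for a bad edge $e$ with $m_e$ incident triangles, note that $m_e-2$ of them are in $D$, so the $m_e$ incident triangles can be paired up with the $\ge m_e-2$ deleted ones so that each deleted triangle at $e$ receives at most $\lceil m_e/(m_e-2)\rceil \le 3$ triangles of $\T$ (worst case $m_e=3$: three triangles, one deleted, ratio $3$). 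Summing over the (at most three) bad edges of a deleted triangle $T'$, each $T'\in D$ is charged by at most $3\cdot 3 = 9$ — but this overshoots the claimed $7k$, so the bookkeeping must be tightened: one should observe that a triangle of $\T$ has its charge counted once (pick for each $T\in\T$ a single bad edge), so each bad edge $e$ is only charged by the triangles that selected it, and a global double-counting over all bad edges gives $|\T| \le \sum_{e \text{ bad}} m_e$ while $\sum_{e\text{ bad}} (m_e-2) \le 3|D| \le 3k$ (each deleted triangle has $\le 3$ edges). Writing $m_e = (m_e-2)+2$ and bounding the number of bad edges by $|D|$'s incidences as well, $\sum_{e\text{ bad}} m_e \le \sum_{e\text{ bad}}(m_e-2) + 2\cdot(\#\text{bad edges}) \le 3k + 2\cdot 3k$... which again is $9k$. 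So the sharp argument is to note that the number of bad edges is itself at most $3k$ only if every deleted triangle kills a distinct bad edge; more precisely $\#\{\text{bad edges}\} \le \#\{\text{triangle-edge incidences from }D\}$ but a bad edge needs $\ge 1$ deleted triangle, giving $\#\text{bad edges} \le 3k$, hence $|\T| \le \sum_{e\text{ bad}} m_e \le 3k + 2\cdot 3k$ is still too weak; the correct refinement uses that an edge with $m_e-2\ge 2$ deleted triangles "pays for itself" more efficiently, so the extremal case is all bad edges having exactly $3$ incident triangles with exactly one deleted, and then $|\T| \le 3\cdot(\#\text{bad edges among this type})$ while each deleted triangle sits on $\le 3$ such edges, giving $\#\text{such bad edges}\le 3k$ hence... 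I would ultimately organize this as: assign to each bad edge a deleted triangle incident with it (possible since $m_e-2\ge1$); a deleted triangle is assigned at most $3$ bad edges (its $\le 3$ edges); each bad edge with $m_e$ incident triangles contributes $\le m_e$ triangles to $\T$ but also $m_e \le 2 + (m_e-2)$ and the $(m_e-2)$ extra ones are themselves deleted, hence not a fresh burden — tracking this yields the factor $7$.

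The main obstacle I expect is exactly this last optimization of constants: the naive ``three edges, each absorbing a factor $3$'' bound gives $9k$, and getting down to $7k$ requires noticing that triangles of $\T$ that are themselves deleted (there are at most $k$ of them) should be accounted separately, so that $|\T| \le k + 6k = 7k$, where the $6k$ bounds the non-deleted triangles of $\T$: each such triangle has a bad edge $e$ with $m_e-2\ge1$ deleted triangles incident, charge it to one of them, and a single deleted triangle — having $\le 3$ edges, each edge $e$ incident with $\le m_e$ non-deleted triangles but we only charge $\le 2$ non-deleted ones per (deleted-triangle, edge) pair after distributing evenly among the $m_e-2 \ge 1$ deleted triangles at $e$ — receives at most $2\cdot 3 = 6$ charges. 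Hence $|\T|\le |\T\cap D| + 6|D| \le k + 6k = 7k$. I would present the argument in this final clean form, skipping the dead ends above.
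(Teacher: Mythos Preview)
Your final argument is correct and is essentially the paper's proof: split $\T$ into $\T\cap D$ (at most $k$ triangles) and $\T\setminus D$, and charge each triangle of $\T\setminus D$ to a deleted triangle via a shared bad edge, using that a deleted triangle has three edges and each edge meets at most two surviving triangles (since $\K\setminus D$ is a $2$-sphere), giving $|\T\setminus D|\le 6|D|\le 6k$. The paper states this directly without the detours or the ``distribute evenly among the $m_e-2$ deleted triangles'' step, which is unnecessary once you observe that at most two non-deleted triangles can sit on any edge.
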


\begin{proof}
  Let $\D$ be a given solution with at most $k$ triangles; this means that $\K\setminus \D$
  is a 2-sphere.
  Each triangle in $\T\setminus\D$ must share at least one edge with a triangle in $\D$.
  Additionally, each triangle in $\D$ can share an edge with at most six triangles in $\T\setminus\D$, as each of the three edges of a triangle in $\D$ is shared between at most two triangles of $\T\setminus\D$.
  Thus, $|\T\setminus\D| \leq 6 \cdot |\D|$, giving $|\T| \leq 7k$.
\end{proof}

\begin{rrule}\label{rule:rejectlargec}
Reject the instance if $|\T|>7k$.
\end{rrule}

Observe now that in $\K_\T:=\K\setminus\T$ all edges are shared by at most two triangles, and that edges shared with triangles in $\T$ are only part of \emph{one} triangle in $\K_\T$. 
Let us say that a simplicial complex $\K$ is \DEF{edge-connected} if between
each two points of $\K$ there exists a path whose interior is disjoint from the vertices of $\K$. Thus,
an edge-connected simplicial complex $\K$ is obtained by gluing triangles along edges such that
one gets a connected simplicial complex.
We can also identify vertices, but that operation does not affect whether it is thick-connected or not.
It is easy to see that one can define edge-connected components as maximal edge-connected subcomplexes.
Some vertices may belong to several edge-connected components. See \cref{fig:components} for
an example.
In the following, \emph{whenever we refer to a component, it means an edge-connected component}.
Accordingly, triangles in $\K_\T$ form components that can be homeomorphic to, e.g., the $2$-sphere or to a punctured disk. Say that the \emph{boundary} of a component is the set of edges $L$ that are contained in exactly one triangle of the component; these are exactly the edges that participate also in triangles of $\T$. We distinguish components according to whether or not they have a boundary.

\begin{figure}
\centering
	\includegraphics[page=6,scale=.7]{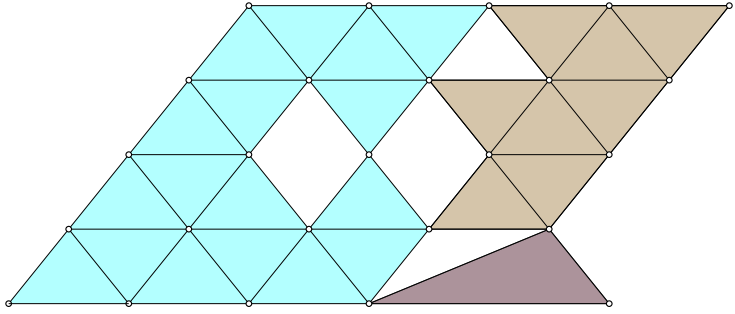}
	\caption{A simplicial complex with $3$ (edge-connected) components, each marked with different colors.}
	\label{fig:components}
\end{figure}

For any component without boundary the procedure is simple: It cannot have any edge $L$ in common with a triangle of $\T$ since then three or more triangles of $\K$ would share $L$ and all incident triangles would be in $\T$. Accordingly, such a component can only be part of the desired $2$-sphere if it itself is homeomorphic to the $2$-sphere since it is not connected with other triangles in $\K$. The only other option is to delete the entire component since deleting it partially would always leave a boundary.

\begin{rrule}\label{rule:componentwithoutboundary}
Let $\C$ be a component of $\K_\T$ that has no boundary. If $\C$ is homeomorphic to the $2$-sphere $\SS_2$ and $|\K\setminus\C|\leq k$ then answer yes (and return $\K\setminus\C$ as a solution). Else, if $\C$ is not homeomorphic to the $2$-sphere or if $|\K\setminus\C|>k$, then delete all triangles of $\C$ from $\K$ and reduce $k$ by $|\C|$.
\end{rrule}

Using \cref{rule:degreeone} through~\cref{rule:componentwithoutboundary} we either solve the instance or we arrive at the situation where $|\T|\leq 7k$ and all components of $\K_\T=\K\setminus\T$ have boundaries. Observe that, among these, we can safely delete each component $\C$ that is not homeomorphic to a (punctured) disk: Such a subcomplex $\C$ cannot be extended to a subcomplex of $\K$ that is homeomorphic to the $2$-sphere because the requirement of having two triangles incident with each edge implies using all triangles of $\C$. For example, when $\C$ is a punctured torus, we cannot extend it to a sphere using the whole $\C$.

\begin{rrule}\label{rule:componentwithboundarybutnotpunctureddisk}
If $\C$ is a component of $\K_\T$ that has a boundary but is not homeomorphic to a (punctured) disk then delete all triangles of $\C$ from $\K$ and reduce $k$~by~$|\C|$.
\end{rrule}

It remains to consider the case where $|\T|\leq 7k$ and all components of $\K_\T$ (have boundaries and) are homeomorphic to (punctured) disks. As a first step, let us observe an upper bound on the total length of all component boundaries (in terms of number of edges) for yes-instances.

\begin{proposition}\label{proposition:boundarybound}
If $(\K,k)$ is a yes-instance of \deletiontotwosphere then the total length of all boundaries of components of $\K_\T$ is at most $21k$.
\end{proposition}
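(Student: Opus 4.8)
The plan is to bound the total boundary length by a charging argument against the $k$ deleted triangles, in the spirit of the proof that $|\T|\le 7k$. Fix a solution $\D$ with $|\D|\le k$, so that $\K\setminus\D$ is homeomorphic to $\SS_2$. Every boundary edge $L$ of a component $\C$ of $\K_\T$ lies in exactly one triangle of $\C$ and, by definition of $\T$, also in at least one triangle of $\T$; in fact $L$ is shared by at least three triangles of $\K$ in total. In the sphere $\K\setminus\D$ every remaining edge is shared by exactly two triangles, so for each such boundary edge $L$ at least one of its incident triangles must be deleted, and since the unique triangle of $\C$ containing $L$ need not be in $\D$, there must be a triangle of $\T\cap\D$ incident with $L$ (if the $\C$-triangle is also deleted we may still charge $L$ to it, as it lies in $\D$). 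Hence every boundary edge is incident with at least one triangle in $\D$.

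The second ingredient is that a single triangle has only three edges, so it can be incident with at most three boundary edges; this gives the crude bound ``total boundary length $\le 3k$'', which is actually stronger than claimed, so I expect the real statement is accounting for the slightly weaker but more robust charging that also absorbs the earlier reduction rules. The intended argument is most likely: each boundary edge is charged to an incident deleted triangle, each deleted triangle receives charge from at most its three edges, and each edge of a deleted triangle is a boundary edge of at most two distinct components (since an edge lies in at most two triangles of $\K_\T$, but a boundary edge lies in exactly one triangle of $\K_\T$, so actually at most one boundary edge per edge of the deleted triangle contributes on the $\K_\T$-side). Combining, total boundary length is at most a small constant times $k$. The constant $21 = 3\cdot 7$ strongly suggests the bound is derived instead from Proposition on $|\T|$: each triangle of $\T$ has three edges, each of which can be a boundary edge of at most one component of $\K_\T$ (as that edge is contained in exactly one $\K_\T$-triangle), and every boundary edge of every component arises this way; hence the total boundary length is at most $3|\T|\le 3\cdot 7k = 21k$.

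So the clean route I would take is: first argue that every boundary edge of every component of $\K_\T$ is an edge of some triangle in $\T$ (this is already observed in the text: such edges ``participate also in triangles of $\T$''); then observe that distinct boundary edges are distinct edges, and each edge of each triangle in $\T$ can serve as a boundary edge for at most one component of $\K_\T$ (because a boundary edge lies in exactly one triangle of $\K_\T$, hence in exactly one component); conclude that the total number of boundary edges over all components is at most $3|\T|$; finally invoke the Proposition bounding $|\T|\le 7k$ for yes-instances to get $3\cdot 7k = 21k$.

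The only subtle point — and the step I would be most careful about — is the claim that each edge of $\T$ contributes to at most one component boundary. This needs the observation that after Reduction rules~\ref{rule:degreeone}--\ref{rule:componentwithboundarybutnotpunctureddisk} every edge of $\K$ lies in at least two triangles, and an edge $L$ that is a boundary edge of a component of $\K_\T$ lies in exactly one triangle of $\K_\T$ and in at least one (hence all remaining) triangles of $\T$; thus $L$ determines a unique component of $\K_\T$, so no edge of $\K$ is a boundary edge of two components. Counting edges of triangles in $\T$ with multiplicity at most $3|\T|$ then gives the bound, and plugging in $|\T|\le 7k$ finishes the proof. I do not expect any genuine obstacle here; the argument is a direct extension of the counting already done for $|\T|$.
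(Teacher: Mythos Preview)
Your ``clean route'' is exactly the paper's (two-sentence) proof: every boundary edge of a component of $\K_\T$ is incident with at least one triangle of $\T$ by \cref{rule:degreeone}, so the total boundary length is at most $3|\T|\le 21k$. One small correction to your exploratory paragraph: the claim that every boundary edge $L$ is shared by at least three triangles of $\K$ is false---$L$ may lie in exactly two triangles, one in $\K_\T$ and one in $\T$ (the latter belonging to $\T$ because a \emph{different} one of its edges is overfull)---which is why the direct $3k$ charging to $\D$ does not go through and the detour via $|\T|$ is needed.
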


\begin{proof}
By \cref{rule:degreeone} each boundary edge of a component of $\K_\T=\K\setminus \T$ is incident with at least two triangles of $\K$, and hence with at least one triangle of $\T$. The upper bound of $3\cdot|\T|\leq 21k$ follows.
\end{proof}

Note that from the upper bound of $21k$ for the total boundary length we immediately get an upper bound of $7k$ for the number of components of $\K_\T$ since each component with a boundary must have at least three boundary edges. To get an upper bound on the number of triangles it now suffices to replace large components by ``equivalent'' ones without changing the status of the instance using \cref{le:componentreplacement}. This has two vital aspects: (1) Replaced components must have the same boundary and topology. (2) We must avoid creating false positives, as smaller components can be deleted at a lower cost. In \cref{le:componentreplacement} we show how components with boundary length $\ell$ can be replaced by equivalent ones with $\Oh(\ell)$ triangles, addressing (1), and later give two options for addressing~(2). 

\begin{lemma}\label{le:componentreplacement}
  Given a simplicial complex $\K$ of a punctured sphere where $\K$ contains $\ell$ boundary edges, there exists a simplicial complex $\K'$ such that the following hold:
  \begin{enumerate}
  \item $\K'$ contains $\Oh(\ell)$ triangles,
  \item $\K$ is homeomorphic to $\K'$,
  \item $\K$ and $\K'$ have exactly the same boundary, and
  \item if $a$ and $b$ are edges of $\partial(\K)$ such that there exists a triangle $t$ of $\K$ with $a,b\in t$, then there exists a triangle $t' \in \K'$ such that $a,b \in t'$.\label{adjacentedges}
  \end{enumerate}
\end{lemma}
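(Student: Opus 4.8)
The plan is to build the replacement complex $\K'$ directly from a combinatorial description of the boundary, and then argue that any punctured sphere with the same boundary data is forced to be homeomorphic to it. First I would record the relevant combinatorial data of $\K$: the boundary $\partial(\K)$ is a disjoint union of, say, $c$ cycles $C_1,\dots,C_c$, with total length $\ell = \sum_i |C_i|$, and since $\K$ is a punctured sphere (a sphere with $c$ open disks removed) we have $c\geq 1$. Together with the adjacency data from condition~\ref{adjacentedges} --- the set of pairs $\{a,b\}$ of boundary edges cohabiting a triangle of $\K$ --- this is all the information $\K'$ must reproduce. Note that any such cohabiting pair must consist of two edges meeting at a common vertex, hence two consecutive edges of the same boundary cycle $C_i$; so there are at most $\ell$ such pairs.

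Next I would explicitly construct $\K'$. Triangulate an abstract $2$-sphere with $c$ marked faces $F_1,\dots,F_c$; inside (a subdivision of) $F_i$, glue in an annulus-like ``collar'' triangulation whose outer boundary is the cycle $C_i$ --- this is routine: any cycle of length $m$ bounds a triangulated disk with $\Oh(m)$ triangles (e.g. a fan from a new apex vertex, or a zig-zag triangulated annulus connecting $C_i$ to a shorter internal cycle), and then one fills in the remaining central region of the sphere with $\Oh(c)\subseteq\Oh(\ell)$ more triangles. To guarantee condition~\ref{adjacentedges}, for each cohabiting pair $\{a,b\}$ of $\K$, which are consecutive edges $uv,vw$ of some $C_i$, I would make sure the collar triangulation includes the triangle $uvw$ (if $u=w$ this is degenerate, but that cannot happen for a cohabiting pair of a punctured sphere's boundary); and for the remaining consecutive pairs along each $C_i$ that are \emph{not} cohabiting in $\K$, I would instead route the collar so that $uv$ and $vw$ lie in distinct triangles --- using one extra vertex near $v$ suffices. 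The total edge count is $\sum_i \Oh(|C_i|) + \Oh(c) = \Oh(\ell)$, giving~(1); by construction $\partial(\K')=\partial(\K)$, giving~(3); and $\K'$ is a triangulated $2$-sphere with $c$ disks removed, hence homeomorphic to $\K$ (all punctured spheres with the same number of punctures are homeomorphic, and $\K$ has $c$ punctures), giving~(2).

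The main obstacle I anticipate is not any single step above but the bookkeeping in~(4) when several cohabiting pairs share a vertex: if $v$ is incident to three or more boundary edges, or to two boundary edges of the \emph{same} cycle in a pinched configuration, the naive ``add the triangle $uvw$'' recipe can force non-simplicial or non-manifold behavior at $v$. I would handle this by first passing to a barycentric subdivision $\Sd(\K)$ at the conceptual level (as the paper already does elsewhere) so that the boundary cycles become genuinely simple and disjoint, noting that subdivision only changes $\ell$ by a constant factor and preserves which pairs of \emph{original} boundary edges cohabit a triangle --- and then carrying out the explicit construction above on the cleaned-up data, finally reading condition~\ref{adjacentedges} back in terms of the original edges. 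Once the boundary data is genuinely simple, placing the required triangles $uvw$ is local and non-conflicting, and the $\Oh(\ell)$ bound is immediate.
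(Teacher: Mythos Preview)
Your approach is correct and follows the same strategy as the paper: explicitly build $\K'$ from the boundary data via a fan-type triangulation, insert the required cohabiting triangles directly, and subdivide to repair any non-simplicial artifacts. The paper's version is terser --- it introduces a single central vertex $v$, cones every boundary edge to it, and for each cohabiting pair $(v_iv_j,v_jv_k)$ adds the triangle $v_iv_jv_k$; parallel edges $vv_i$ are then fixed by local subdivision, and the general punctured-sphere case is deferred to ``standard arguments often used for planar graphs''. Your version handles multiple boundary cycles more explicitly, which is a plus. Two of your concerns are unnecessary, however: since $\K$ is by hypothesis homeomorphic to a surface-with-boundary, each boundary vertex lies on exactly two boundary edges and the boundary cycles are vertex-disjoint, so the pinched configurations you worry about cannot occur and the barycentric-subdivision detour is not needed; and condition~(4) is only a one-way implication, so you need not prevent \emph{extra} cohabiting pairs from appearing in $\K'$.
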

\begin{proof}
  Note that if $a$ and $b$ are edges of $\K$ as in condition~\ref{adjacentedges}, then in $\K$ edges $a,b$ must share a common vertex also on the boundary of $\K$.
  If $\ell = 3$, then $\K$ is simply a triangle and $\K'=\K$ suffices.
  If $\ell > 3$, we will construct $\K'$. We will denote by $v$ a central vertex of $\K'$. The boundary of $\K'$ is the boundary of $\K$.
  For any $a,b$ as in condition~\ref{adjacentedges}, let $a,b = ((v_i,v_j),(v_j,v_k))$ and insert the triangles $(v, v_i, v_j)$ and $(v_i, v_j, v_k)$ into $\K'$.
  For any remaining edges $(v_i, v_j)$ in $\partial(\K)$, add the triangle $(v, v_i, v_j)$ to $\K'$.
  Note that no edge $(v, v_i)$ will be on the boundary of $\K'$, as each $v_i$ must be the intersection of exactly two edges of $\partial(\K)$. We have a potential problem that the we may introduce several ``parallel'' edges $vv_i$, which is not allowed in a simplicial complex. We can fix this by subdividing each edge $vv_i$ of which we try to make multiple copies, and retriangulating the two faces incident to those edges. See \cref{fig:componentreplacement} for an example. Standard arguments often used for planar graphs show that this procedure constructs the desired punctured sphere (or disc) using $\Oh(\ell)$ triangles.
\end{proof}

\begin{figure}
\centering
	\includegraphics[page=7]{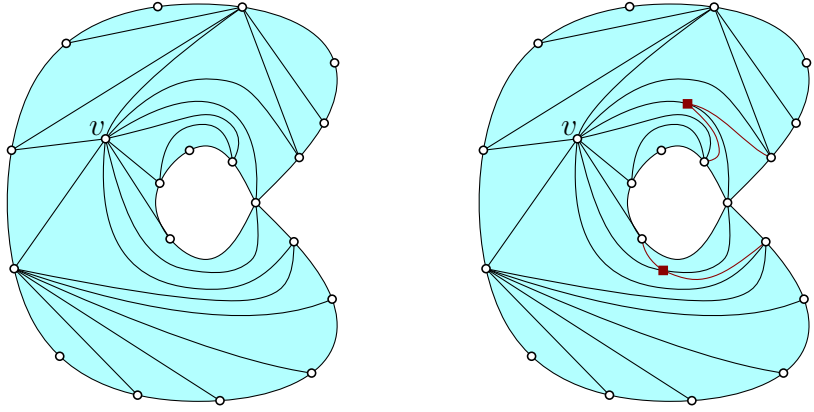}
	\caption{Proof of \cref{le:componentreplacement}. On the left side we have ``parallel'' edges and we handle it subdividing those parallel edges and triangulating the incident faces.}
	\label{fig:componentreplacement}
\end{figure}

To avoid false positives we have two options. First, we can store for each component its initial number of triangles, i.e., the cost for deleting it entirely, noting that costs larger than $k$ can be replaced by $k+1$. (Recall that partially deleting a component is infeasible.) The output would then be an instance of a weighted version of the problem, and we could encode it using $\Oh(k\log k)$ bits, where the log-factor is needed to encode costs in binary and to represent a list of the triangles including vertex names. (We could also assign a larger cost to one triangle per component such that the total is equal to the original value.)

Second, we could apply the replacement only to components with more than $k$ triangles, and afterwards increase their size to $k+\Oh(1)$ by adding additional triangles. Since budget of $k$ does not allow the deletion of large components, this yields an equivalent instance. The total number of triangles per component is then $\Oh(k)$, and $\Oh(k^2)$ for the entire instance; this can be encoded in $\Oh(k^2\log k)$ bits. This completes the proof of \cref{theorem:other:compression}.

The compression result can be lifted to a smaller parameter, namely the number $t$ of \emph{conflict triangles}, i.e., triangles incident with at least one edge that is shared by at least three triangles (\cref{corollary:compression:smallerparameter} below). To see that this is a stronger parameter, recall that nontrivial instances with budget $k$ have $\Oh(k)$ conflict triangles, and observe that having few conflict triangles does not bound the size $k$ of the desired 2-sphere. 

Observe that for parameter $t$ there is a simple $\Oh^*(2^t)$ time algorithm: First, guess by complete enumeration which of the $t$ conflict triangles are to be deleted. Reject a guess if an edge with at least three incident triangles remains. Iteratively delete triangles that uniquely contain any edge. Thus, we arrive at possibly several components where all edges are shared by exactly two triangles each. Determine which components are homeomorphic to the $2$-sphere and reject the guess if there is no such component. Accept if deleting all triangles outside the largest component costs only $k$ deletions in total (including prior deletions); else reject the guess.

\begin{corollary}\label{corollary:compression:smallerparameter}
The \deletiontotwosphere problem admits a polynomial compression to weighted instances with $\Oh(t)$ triangles and bit-size $\Oh(t^2)$ where $t$ is the number of conflict triangles in the input.
\end{corollary}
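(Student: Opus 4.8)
The plan is to show that the compression of \cref{theorem:other:compression} can be re-run with the parameter $k$ replaced by $t$, the number of conflict triangles, with only cosmetic changes. First I would observe that the set $\T$ defined in \cref{section:preprocessing} — triangles sharing one of their edges with more than one other triangle — is exactly the set of conflict triangles when \cref{rule:degreeone} has been applied: an edge shared by at least three triangles forces all its incident triangles into $\T$, and conversely a triangle in $\T$ has an edge shared with $\ge 2$ other triangles, i.e.\ an edge in at least three triangles in total. Hence $|\T| = t$ after exhaustive application of \cref{rule:degreeone}. (The reduction by \cref{rule:degreeone} only deletes triangles and never creates new conflict edges, so $t$ does not increase; I would note this explicitly.)

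Next I would rerun the pipeline \cref{rule:componentwithoutboundary} through \cref{rule:componentwithboundarybutnotpunctureddisk} verbatim. These rules do not reference $k$ except to compare deletion costs against the remaining budget, which is fine. The key quantitative change is that \cref{proposition:boundarybound} must be restated in terms of $t$: each boundary edge of a component of $\K_\T$ lies in at least one triangle of $\T$, and each triangle of $\T$ contributes at most three boundary edges, so the total boundary length of all components of $\K_\T$ is at most $3t$ (we no longer route through the bound $|\T|\le 7k$, which came from \cref{rule:rejectlargec}; that rule is simply dropped, since we are not assuming $k$ is small). As before, each component with a boundary has at least three boundary edges, so there are at most $t$ components. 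Applying \cref{le:componentreplacement} to each component of boundary length $\ell$ replaces it by an equivalent one with $\Oh(\ell)$ triangles; summing over components gives $\Oh(t)$ triangles in total. To avoid false positives we use the \emph{weighted} option from the proof of \cref{theorem:other:compression}: store, for each replaced component, the cost of deleting it entirely (capped at $k+1$), yielding a weighted instance with $\Oh(t)$ triangles. Encoding $\Oh(t)$ triangles with vertex labels from a universe of size $\Oh(t)$, plus one weight per component capped at $k+1$, costs $\Oh(t\log t)$ bits for the combinatorial part; but the weights may be as large as $k+1$, which need not be polynomial in $t$, so the honest bound on the bit-size of the weighted instance is $\Oh(t\log t + t\log k)$. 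Since the statement claims bit-size $\Oh(t^2)$, I would point out that we may instead cap each component's deletion cost at $\min(k,|\C|)+1$ and, crucially, that a component surviving replacement has at most $\Oh(t)$ triangles in the \emph{output}; if its original size exceeded some threshold we can simply record ``cost $> $ current budget'' with a single bit and treat it as undeletable, so no large integer weights are ever needed and the total is $\Oh(t)$ triangles encoded in $\Oh(t\log t)\subseteq\Oh(t^2)$ bits.

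The main obstacle I anticipate is precisely this weight-encoding issue: the construction in \cref{theorem:other:compression} was content with $\Oh(k\log k)$ bits because there $k$ is the parameter, whereas here $k$ is unbounded in $t$, so one must be careful that deletion costs are either genuinely $\Oh(t)$-bounded or replaced by a Boolean ``too expensive'' flag relative to the running budget. The cleanest fix is the observation that in every reachable configuration the only decision about a component is delete-it-entirely-or-keep-it, so a single bit recording whether its (fixed, known) triangle count exceeds the remaining budget suffices in place of the integer weight; this keeps everything within $\Oh(t)$ triangles and $\Oh(t^2)$ bits. The final subtlety — that having few conflict triangles genuinely does not bound $k$, so this is a strictly more general parameter — I would dispatch with the one-line example already sketched before the corollary statement (a huge triangulated $2$-sphere has $t=0$ but arbitrarily large $k$), and I would also include the promised $\Oh^*(2^t)$ FPT algorithm for parameter $t$ as stated, since its correctness is immediate from the same component analysis.
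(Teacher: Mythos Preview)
Your reduction pipeline and the counting (at most $t$ components, total boundary length $\le 3t$, hence $\Oh(t)$ triangles after \cref{le:componentreplacement}) are all fine and match the paper. The gap is exactly where you feared: the weight-encoding step. Your proposed fix---replace each component's integer weight by a single ``too expensive'' bit---does not work. The decision per component is indeed binary (keep or delete entirely), but the \emph{cost} of deletion is not: if there are, say, five components of original sizes $100,200,300,400,500$ and the remaining budget is $k'=600$, you must know these exact sizes to determine which subsets are affordable. A flag ``size $>k'$'' tells you nothing about components of size $\le k'$ that are still far larger than $t$. Moreover, the budget $k'$ itself must be recorded in the output instance, and it too is unbounded in $t$. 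So neither the weights nor the budget fit into $\Oh(t^2)$ bits under your scheme.

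The paper closes this gap with a standard trick that you actually have all the ingredients for but do not deploy: run the $\Oh^*(2^t)$ algorithm (which you mention only as a side remark) whenever the current total number $n$ of triangles satisfies $n\ge 2^t$; in that regime $2^t\le n$, so the algorithm is polynomial in the input size and the instance is solved outright. Otherwise $n<2^t$, and now every integer you need---each component's original triangle count and the budget $k'$ (which can be capped at $n$)---is at most $2^t$ and hence encodable in $t$ bits. With $\Oh(t)$ triangles and $\Oh(t)$ weights of $t$ bits each, the total is $\Oh(t^2)$ bits, as claimed.
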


\begin{proof}
Let $(\K,k)$ be an instance of \deletiontotwosphere with $t$ conflict triangles. All four reduction rules can be safely applied: They preserve the correct yes- or no-answer and they do not increase (but may decrease) the number of conflict triangles. Through exhaustive application of the rules we obtain an equivalent instance $(\K',k')$ with a set $\T$ of $t'\leq t$ conflict triangles such that all components of $\K'\setminus\T$ are homeomorphic to punctured discs. Following previous arguments there are at most $|\T|\leq t$ such components and their total boundary length is at most $3|\T|\leq 3t$. It remains to encode the resulting instance into a number of bits that is polynomial in $t$.

We can use a standard trick for this: If the total number of triangles, say $n$, is at least $2^t$, we can solve the instance in polynomial time using the aforementioned FPT-algorithm. Else, we have $n< 2^t$ and, hence, numbers of value up to $n$ cost only $t$ bits to encode in binary. For each component of $\K'\setminus\T$ we apply~\cref{le:componentreplacement} to obtain a total of $\Oh(t)$ triangles that represent components with the same topology and same boundaries. Additionally, assign weights to the triangles such that the total weight of each component is equal to its number of triangles before the replacement. These weights take at most $t$ bits each, for a total size of $\Oh(t^2)$ bits.
\end{proof}

\section{Conclusion}\label{section:conclusion}

Our hardness results can be extended easily to cases of finding some other surfaces, such as a torus. Indeed, we can replace in the construction the back sheet with any other shape that has the target topology.
Similarly, the positive results can also be extended to the search for small surfaces, again like the torus.

It is clear that the simplicial complex we use to show hardness cannot be embedded in 3-dimensional space. It is unclear how hard the problem \twosphere is when restricted to simplicial complexes that are embedded in $\RR^3$. Note that it is not meaningful to parameterize the problem by the dimension of some ambient space because any $2$-dimensional simplicial complex can be embedded in $\RR^5$ using the moment curve~(\cite[Section 1.6]{Matousek07}).

A simplicial complex can be generalized to something called, unsurprisingly, a generalized triangulation (or sometimes just referred to as a triangulation).
In this setting, we are allowed to identify facets of a common simplex. That is, in a 2-dimensional generalized triangulation we may identify together two distinct edges of the same triangulation.
This relaxation can make it harder to even detect a manifold, as there are more cases to consider.
Our work here, and related problems, are all still of interest in this setting.

Lastly, the problems discussed in this paper generalize, where possible, in the obvious manner to higher dimensions.
In particular, fast detection of $3$-sphere subcomplexes (or sub-triangulations) that do not bound a ball are of particular interest for the recognition of the prime decomposition of $3$-manifolds.

\subparagraph*{Acknowledgments.}
This work was initiated during the Fixed-Parameter Computational Geometry Workshop 
at the Lorentz Center, 2016. We are grateful to the other participants of the workshop
and the Lorentz Center for their support.


\begin{thebibliography}{10}

\bibitem{AlonYZ95}
N.~Alon, R.~Yuster, and U.~Zwick.
\newblock Color-coding.
\newblock {\em J. {ACM}}, 42(4):844--856, 1995.

\bibitem{Bagchi16}
B.~Bagchi, B.~A. Burton, B.~Datta, N.~Singh, and J.~Spreer.
\newblock Efficient algorithms to decide tightness.
\newblock In S.~Fekete and A.~Lubiw, editors, {\em 32nd International Symposium
  on Computational Geometry, {SoCG} 2016}, volume~51 of {\em Leibniz
  International Proceedings in Informatics (LIPIcs)}, pages 12:1--12:15,
  Dagstuhl, Germany, 2016. Schloss Dagstuhl--Leibniz-Zentrum fuer Informatik.

\bibitem{BonichonGHPS06}
N.~Bonichon, C.~Gavoille, N.~Hanusse, D.~Poulalhon, and G.~Schaeffer.
\newblock Planar graphs, via well-orderly maps and trees.
\newblock {\em Graphs and Combinatorics}, 22(2):185--202, 2006.

\bibitem{BurtonCKP17}
B.~A. Burton, S.~Cabello, S.~Kratsch, and W.~Pettersson.
\newblock The parameterized complexity of finding a 2-sphere in a simplicial
  complex.
\newblock In H.~Vollmer and B.~Vall{\'{e}}e, editors, {\em 34th Symposium on
  Theoretical Aspects of Computer Science, {STACS} 2017}, volume~66 of {\em
  Leibniz International Proceedings in Informatics (LIPIcs)}, pages
  18:1--18:14. Schloss Dagstuhl - Leibniz-Zentrum fuer Informatik, 2017.

\bibitem{BurtonCourcelles}
B.~A. Burton and R.~G. Downey.
\newblock Courcelle's theorem for triangulations.
\newblock {\em Journal of Combinatorial Theory, Series A}, 146:264--294, 2016.

\bibitem{Burton15}
B.~A. Burton, C.~Maria, and J.~Spreer.
\newblock {Algorithms and complexity for Turaev-Viro invariants}.
\newblock In {\em Automata, Languages, and Programming: 42nd International
  Colloquium, {ICALP} 2015, Part 1}, pages 281--293. Springer, 2015.

\bibitem{Burton14}
B.~A. Burton and W.~Pettersson.
\newblock Fixed parameter tractable algorithms in combinatorial topology.
\newblock In Z.~Cai, A.~Zelikovsky, and A.~Bourgeois, editors, {\em Computing
  and Combinatorics: 20th International Conference, {COCOON} 2014}, pages
  300--311, Cham, 2014. Springer International Publishing.

\bibitem{BusaryevCCDW12}
O.~Busaryev, S.~Cabello, C.~Chen, T.~K. Dey, and Y.~Wang.
\newblock Annotating simplices with a homology basis and its applications.
\newblock In F.~V. Fomin and P.~Kaski, editors, {\em Algorithm Theory - {SWAT}
  2012 - 13th Scandinavian Symposium and Workshops}, volume 7357 of {\em
  Lecture Notes in Computer Science}, pages 189--200. Springer, 2012.

\bibitem{ChenF11}
C.~Chen and D.~Freedman.
\newblock Hardness results for homology localization.
\newblock {\em Discrete {\&} Computational Geometry}, 45(3):425--448, 2011.

\bibitem{CyganFKLMPPS15}
M.~Cygan, F.~V. Fomin, {\L{}}.~Kowalik, D.~Lokshtanov, D.~Marx, M.~Pilipczuk,
  and M.~P. andSaket Saurabh.
\newblock {\em Parameterized Algorithms}.
\newblock Springer, 2015.

\bibitem{DowneyF13}
R.~G. Downey and M.~R. Fellows.
\newblock {\em Fundamentals of Parameterized Complexity}.
\newblock Texts in Computer Science. Springer, 2013.

\bibitem{EricksonN11a}
J.~Erickson and A.~Nayyeri.
\newblock Minimum cuts and shortest non-separating cycles via homology covers.
\newblock In D.~Randall, editor, {\em 22nd Annual {ACM-SIAM} Symposium on
  Discrete Algorithms, {SODA} 2011)}, pages 1166--1176, 2011.

\bibitem{Freedman82}
M.~H. Freedman.
\newblock The topology of four-dimensional manifolds.
\newblock {\em J. Differential Geom.}, 17(3):357--453, 1982.

\bibitem{MO}
S.~Ivanov.
\newblock computational complexity (answer).
\newblock MathOverflow.
\newblock
  URL:\texttt{http://mathoverflow.net/questions/118357/computational-complexity}
  (visited on Sep 24, 2016).

\bibitem{KawarabayashiMR08}
K.~Kawarabayashi, B.~Mohar, and B.~A. Reed.
\newblock A simpler linear time algorithm for embedding graphs into an
  arbitrary surface and the genus of graphs of bounded tree-width.
\newblock In {\em 49th Annual {IEEE} Symposium on Foundations of Computer
  Science, {FOCS} 2008}, pages 771--780. {IEEE} Computer Society, 2008.

\bibitem{Maria16}
C.~Maria and J.~Spreer.
\newblock A polynomial time algorithm to compute quantum invariants of
  3-manifolds with bounded first {B}etti number.
\newblock In P.~N. Klein, editor, {\em 28th Annual {ACM-SIAM} Symposium on
  Discrete Algorithms, {SODA} 2017)}, pages 2721--2732. {SIAM}, 2017.

\bibitem{Marx12}
D.~Marx.
\newblock A tight lower bound for planar multiway cut with fixed number of
  terminals.
\newblock In A.~Czumaj, K.~Mehlhorn, A.~M. Pitts, and R.~Wattenhofer, editors,
  {\em Automata, Languages, and Programming - 39th International Colloquium,
  {ICALP} 2012, Part {I}}, volume 7391 of {\em Lecture Notes in Computer
  Science}, pages 677--688. Springer, 2012.

\bibitem{Matousek07}
J.~Matou{\v s}ek.
\newblock {\em Using the {B}orsuk-{U}lam Theorem: Lectures on Topological
  Methods in Combinatorics and Geometry}.
\newblock Universitext. Springer, 2007.

\bibitem{Mohar99}
B.~Mohar.
\newblock A linear time algorithm for embedding graphs in an arbitrary surface.
\newblock {\em {SIAM} J. Discrete Math.}, 12(1):6--26, 1999.

\bibitem{Munkres93a}
J.~R. Munkres.
\newblock {\em Elements of Algebraic Topology}.
\newblock Addison-Wesley, 1993.

\bibitem{Nabutovsky1995}
A.~Nabutovsky.
\newblock Einstein structures: Existence versus uniqueness.
\newblock {\em Geometric {\&} Functional Analysis GAFA}, 5(1):76--91, 1995.

\bibitem{Schleimer11}
S.~Schleimer.
\newblock Sphere recognition lies in {NP}.
\newblock In M.~Usher, editor, {\em Low-Dimensional and Symplectic Topology},
  volume~82 of {\em Proceedings of Symposia in Pure Mathematics}, pages
  183--213. AMS, 2011.

\bibitem{Tutte}
W.~Tutte.
\newblock A census of planar triangulations.
\newblock {\em Canad. J. Math.}, 14:21--38, 1962.

\end{thebibliography}

\end{document}